\newtheorem{theorem}{Theorem}
\theoremstyle{plain}
\newtheorem{axiom}{Axiom}
\newtheorem{condition}{Assumption}
\newtheorem{conjecture}{Conjecture}
\newtheorem{corollary}{Corollary}
\newtheorem{definition}{Definition}
\newtheorem{example}{Example}
\newtheorem{exercise}{Exercise}
\newtheorem{lemma}{Lemma}
\newtheorem{proposition}{Proposition}[section]
\newtheorem*{proposition*}{Proposition}
\theoremstyle{definition}
\newtheorem{remark}{Remark}[section]
\numberwithin{equation}{section}
\newenvironment{note}
    {\bgroup\footnotesize  \begin{justify}
    }
    {\end{justify}
    \egroup }
\newenvironment{petit}{\footnotesize}
\newcommand*\input{./Outputs/[}1]{\input{./Outputs/#1}}
\chardef\@x10\chardef\@xv60
\def\tcitime{
\def\@time{%
  \@minute\time\@hour\@minute\divide\@hour\@xv
  \ifnum\@hour<\@x 0\fi\the\@hour:%
  \multiply\@hour\@xv\advance\@minute-\@hour
  \ifnum\@minute<\@x 0\fi\the\@minute
  }}%
\def\QCTOpt[#1]#2{%
  \def\QCTOptB{#1}
  \def\QCTOptA{#2}
}
\def\QCTNOpt#1{%
  \def\QCTOptA{#1}
  \let\QCTOptB\empty
}
\def\Qct{%
  \@ifnextchar[{%
    \QCTOpt}{\QCTNOpt}
}
\def\QCBOpt[#1]#2{%
  \def\QCBOptB{#1}
  \def\QCBOptA{#2}
}
\def\QCBNOpt#1{%
  \def\QCBOptA{#1}
  \let\QCBOptB\empty
}
\def\Qcb{%
  \@ifnextchar[{%
    \QCBOpt}{\QCBNOpt}
}
\def\PrepCapArgs{%
  \ifx\QCBOptA\empty
    \ifx\QCTOptA\empty
      {}%
    \else
      \ifx\QCTOptB\empty
        {\QCTOptA}%
      \else
        [\QCTOptB]{\QCTOptA}%
      \fi
    \fi
  \else
    \ifx\QCBOptA\empty
      {}%
    \else
      \ifx\QCBOptB\empty
        {\QCBOptA}%
      \else
        [\QCBOptB]{\QCBOptA}%
      \fi
    \fi
  \fi
}
\def\GRAPHICSPS#1{%
 \ifcase\GRAPHICSTYPE
   \special{ps: #1}%
 \or
   \special{language "PS", include "#1"}%
 \fi
}%
\def\graffile#1#2#3#4{%
    \bgroup
    \leavevmode
    \@ifundefined{bbl@deactivate}{\def~{\string~}}{\activesoff}
    \raise -#4 \BOXTHEFRAME{%
        \hbox to #2{\raise #3\hbox to #2{\null #1\hfil}}}%
    \egroup
}%
\def\draftbox#1#2#3#4{%
 \leavevmode\raise -#4 \hbox{%
  \frame{\rlap{\protect\tiny #1}\hbox to #2%
   {\vrule height#3 width\z@ depth\z@\hfil}%
  }%
 }%
}%
\newif\ifwasdraft
\def\GRAPHIC#1#2#3#4#5{%
 \ifnum\draft=\@ne\draftbox{#2}{#3}{#4}{#5}%
  \else\graffile{#1}{#3}{#4}{#5}%
  \fi
 }%
\def\addtoLaTeXparams#1{%
    \edef\LaTeXparams{\LaTeXparams #1}}%
\newif\ifBoxFrame \BoxFramefalse
\newif\ifOverFrame \OverFramefalse
\newif\ifUnderFrame \UnderFramefalse
\def\BOXTHEFRAME#1{%
   \hbox{%
      \ifBoxFrame
         \frame{#1}%
      \else
         {#1}%
      \fi
   }%
}
\def\doFRAMEparams#1{\BoxFramefalse\OverFramefalse\UnderFramefalse\readFRAMEparams#1\end}%
\def\readFRAMEparams#1{%
 \ifx#1\end%
  \let\next=\relax
  \else
  \ifx#1i\dispkind=\z@\fi
  \ifx#1d\dispkind=\@ne\fi
  \ifx#1f\dispkind=\tw@\fi
  \ifx#1t\addtoLaTeXparams{t}\fi
  \ifx#1b\addtoLaTeXparams{b}\fi
  \ifx#1p\addtoLaTeXparams{p}\fi
  \ifx#1h\addtoLaTeXparams{h}\fi
  \ifx#1X\BoxFrametrue\fi
  \ifx#1O\OverFrametrue\fi
  \ifx#1U\UnderFrametrue\fi
  \ifx#1w
    \ifnum\draft=1\wasdrafttrue\else\wasdraftfalse\fi
    \draft=\@ne
  \fi
  \let\next=\readFRAMEparams
  \fi
 \next
 }%
\def\IFRAME#1#2#3#4#5#6{%
      \bgroup
      \let\QCTOptA\empty
      \let\QCTOptB\empty
      \let\QCBOptA\empty
      \let\QCBOptB\empty
      #6%
      \parindent=0pt%
      \leftskip=0pt
      \rightskip=0pt
      \setbox0 = \hbox{\QCBOptA}%
      \@tempdima = #1\relax
      \ifOverFrame
          \typeout{This is not implemented yet}%
          \show\HELP
      \else
         \ifdim\wd0>\@tempdima
            \advance\@tempdima by \@tempdima
            \ifdim\wd0 >\@tempdima
               \textwidth=\@tempdima
               \setbox1 =\vbox{%
                  \noindent\hbox to \@tempdima{\hfill\GRAPHIC{#5}{#4}{#1}{#2}{#3}\hfill}\\%
                  \noindent\hbox to \@tempdima{\parbox[b]{\@tempdima}{\QCBOptA}}%
               }%
               \wd1=\@tempdima
            \else
               \textwidth=\wd0
               \setbox1 =\vbox{%
                 \noindent\hbox to \wd0{\hfill\GRAPHIC{#5}{#4}{#1}{#2}{#3}\hfill}\\%
                 \noindent\hbox{\QCBOptA}%
               }%
               \wd1=\wd0
            \fi
         \else
            \ifdim\wd0>0pt
              \hsize=\@tempdima
              \setbox1 =\vbox{%
                \unskip\GRAPHIC{#5}{#4}{#1}{#2}{0pt}%
                \break
                \unskip\hbox to \@tempdima{\hfill \QCBOptA\hfill}%
              }%
              \wd1=\@tempdima
           \else
              \hsize=\@tempdima
              \setbox1 =\vbox{%
                \unskip\GRAPHIC{#5}{#4}{#1}{#2}{0pt}%
              }%
              \wd1=\@tempdima
           \fi
         \fi
         \@tempdimb=\ht1
         \advance\@tempdimb by \dp1
         \advance\@tempdimb by -#2%
         \advance\@tempdimb by #3%
         \leavevmode
         \raise -\@tempdimb \hbox{\box1}%
      \fi
      \egroup%
}%
\def\DFRAME#1#2#3#4#5{%
 \begin{center}
     \let\QCTOptA\empty
     \let\QCTOptB\empty
     \let\QCBOptA\empty
     \let\QCBOptB\empty
     \ifOverFrame 
        #5\QCTOptA\par
     \fi
     \GRAPHIC{#4}{#3}{#1}{#2}{\z@}
     \ifUnderFrame 
        \nobreak\par\nobreak#5\QCBOptA
     \fi
 \end{center}%
 }%
\def\FFRAME#1#2#3#4#5#6#7{%
 \begin{figure}[#1]%
  \let\QCTOptA\empty
  \let\QCTOptB\empty
  \let\QCBOptA\empty
  \let\QCBOptB\empty
  \ifOverFrame
    #4
    \ifx\QCTOptA\empty
    \else
      \ifx\QCTOptB\empty
        \caption{\QCTOptA}%
      \else
        \caption[\QCTOptB]{\QCTOptA}%
      \fi
    \fi
    \ifUnderFrame\else
      \label{#5}%
    \fi
  \else
    \UnderFrametrue%
  \fi
  \begin{center}\GRAPHIC{#7}{#6}{#2}{#3}{\z@}\end{center}%
  \ifUnderFrame
    #4
    \ifx\QCBOptA\empty
      \caption{}%
    \else
      \ifx\QCBOptB\empty
        \caption{\QCBOptA}%
      \else
        \caption[\QCBOptB]{\QCBOptA}%
      \fi
    \fi
    \label{#5}%
  \fi
  \end{figure}%
 }%
\def\makeactives{
  \catcode`\"=\active
  \catcode`\;=\active
  \catcode`\:=\active
  \catcode`\'=\active
  \catcode`\~=\active
}
   \gdef\activesoff{%
      \def"{\string"}
      \def;{\string;}
      \def:{\string:}
      \def'{\string'}
      \def~{\string~}
    }
\def\FRAME#1#2#3#4#5#6#7#8{%
 \bgroup
 \ifnum\draft=\@ne
   \wasdrafttrue
 \else
   \wasdraftfalse%
 \fi
 \def\LaTeXparams{}%
 \dispkind=\z@
 \def\LaTeXparams{}%
 \doFRAMEparams{#1}%
 \ifnum\dispkind=\z@\IFRAME{#2}{#3}{#4}{#7}{#8}{#5}\else
  \ifnum\dispkind=\@ne\DFRAME{#2}{#3}{#7}{#8}{#5}\else
   \ifnum\dispkind=\tw@
    \edef\@tempa{\noexpand\FFRAME{\LaTeXparams}}%
    \@tempa{#2}{#3}{#5}{#6}{#7}{#8}%
    \fi
   \fi
  \fi
  \ifwasdraft\draft=1\else\draft=0\fi{}%
  \egroup
 }%
\def\TEXUX#1{"texux"}
\long\def\QQQ#1#2{%
     \long\expandafter\def\csname#1\endcsname{#2}}%
\long\def\QQA#1#2{}%
\def\QTR#1#2{{\csname#1\endcsname #2}}
\def\EXPAND#1[#2]#3{}%
\def\NOEXPAND#1[#2]#3{}%
\def\LaTeXparent#1{}%
\def\ChildStyles#1{}%
\def\ChildDefaults#1{}%
\def\QTagDef#1#2#3{}%
  \providecommand{\UNICODE}[2][]{}
\def\QQfnmark#1{\footnotemark}
 \def\abstract{%
  \if@twocolumn
   \section*{Abstract (Not appropriate in this style!)}%
   \else \small 
   \begin{center}{\bf Abstract\vspace{-.5em}\vspace{\z@}}\end{center}%
   \quotation 
   \fi
  }%
   \def\registered{\relax\ifmmode{}\r@gistered
                    \else$\m@th\r@gistered$\fi}%
 \def\r@gistered{^{\ooalign
  {\hfil\raise.07ex\hbox{$\scriptstyle\rm\text{R}$}\hfil\crcr
  \mathhexbox20D}}}}{}%
\newdimen\theight
\def\Column{%
 \vadjust{\setbox\z@=\hbox{\scriptsize\quad\quad tcol}%
  \theight=\ht\z@\advance\theight by \dp\z@\advance\theight by \lineskip
  \kern -\theight \vbox to \theight{%
   \rightline{\rlap{\box\z@}}%
   \vss
   }%
  }%
 }%
\def\qed{%
 \ifhmode\unskip\nobreak\fi\ifmmode\ifinner\else\hskip5\p@\fi\fi
 \hbox{\hskip5\p@\vrule width4\p@ height6\p@ depth1.5\p@\hskip\p@}%
 }%
\def\miss{\hbox{\vrule height2\p@ width 2\p@ depth\z@}}%
\def\tcol#1{{\baselineskip=6\p@ \vcenter{#1}} \Column}  %
\def\newfmtname{LaTeX2e}
  \DeclareOldFontCommand{\rm}{\normalfont\rmfamily}{\mathrm}
  \DeclareOldFontCommand{\sf}{\normalfont\sffamily}{\mathsf}
  \DeclareOldFontCommand{\tt}{\normalfont\ttfamily}{\mathtt}
  \DeclareOldFontCommand{\bf}{\normalfont\bfseries}{\mathbf}
  \DeclareOldFontCommand{\it}{\normalfont\itshape}{\mathit}
  \DeclareOldFontCommand{\sl}{\normalfont\slshape}{\@nomath\sl}
  \DeclareOldFontCommand{\sc}{\normalfont\scshape}{\@nomath\sc}
\def\alpha{{\Greekmath 010B}}%
\def\beta{{\Greekmath 010C}}%
\def\gamma{{\Greekmath 010D}}%
\def\delta{{\Greekmath 010E}}%
\def\epsilon{{\Greekmath 010F}}%
\def\zeta{{\Greekmath 0110}}%
\def\eta{{\Greekmath 0111}}%
\def\theta{{\Greekmath 0112}}%
\def\iota{{\Greekmath 0113}}%
\def\kappa{{\Greekmath 0114}}%
\def\lambda{{\Greekmath 0115}}%
\def\mu{{\Greekmath 0116}}%
\def\nu{{\Greekmath 0117}}%
\def\xi{{\Greekmath 0118}}%
\def\pi{{\Greekmath 0119}}%
\def\rho{{\Greekmath 011A}}%
\def\sigma{{\Greekmath 011B}}%
\def\tau{{\Greekmath 011C}}%
\def\upsilon{{\Greekmath 011D}}%
\def\phi{{\Greekmath 011E}}%
\def\chi{{\Greekmath 011F}}%
\def\psi{{\Greekmath 0120}}%
\def\omega{{\Greekmath 0121}}%
\def\varepsilon{{\Greekmath 0122}}%
\def\vartheta{{\Greekmath 0123}}%
\def\varpi{{\Greekmath 0124}}%
\def\varrho{{\Greekmath 0125}}%
\def\varsigma{{\Greekmath 0126}}%
\def\varphi{{\Greekmath 0127}}%
\def\nabla{{\Greekmath 0272}}
\def\FindBoldGroup{%
   {\setbox0=\hbox{$\mathbf{x\global\edef\theboldgroup{\the\mathgroup}}$}}%
}
\def\Greekmath#1#2#3#4{%
    \if@compatibility
        \ifnum\mathgroup=\symbold
           \mathchoice{\mbox{\boldmath$\displaystyle\mathchar"#1#2#3#4$}}%
                      {\mbox{\boldmath$\textstyle\mathchar"#1#2#3#4$}}%
                      {\mbox{\boldmath$\scriptstyle\mathchar"#1#2#3#4$}}%
                      {\mbox{\boldmath$\scriptscriptstyle\mathchar"#1#2#3#4$}}%
        \else
           \mathchar"#1#2#3#4%
        \fi 
    \else 
        \FindBoldGroup
        \ifnum\mathgroup=\theboldgroup 
           \mathchoice{\mbox{\boldmath$\displaystyle\mathchar"#1#2#3#4$}}%
                      {\mbox{\boldmath$\textstyle\mathchar"#1#2#3#4$}}%
                      {\mbox{\boldmath$\scriptstyle\mathchar"#1#2#3#4$}}%
                      {\mbox{\boldmath$\scriptscriptstyle\mathchar"#1#2#3#4$}}%
        \else
           \mathchar"#1#2#3#4%
        \fi     	    
	  \fi}
\newif\ifGreekBold  \GreekBoldfalse
\let\SAVEPBF=\pbf
\def\pbf{\GreekBoldtrue\SAVEPBF}%
  \newcounter{equationnumber}  
  \def\mathletters{%
     \addtocounter{equation}{1}
     \edef\@currentlabel{\theequation}%
     \setcounter{equationnumber}{\c@equation}
     \setcounter{equation}{0}%
     \edef\theequation{\@currentlabel\noexpand\alph{equation}}%
  }
    \def\BibTeX{{\rm B\kern-.05em{\sc i\kern-.025em b}\kern-.08em
                 T\kern-.1667em\lower.7ex\hbox{E}\kern-.125emX}}}{}%
\def\AmS{{\protect\usefont{OMS}{cmsy}{m}{n}%
                A\kern-.1667em\lower.5ex\hbox{M}\kern-.125emS}}}{}%
\def\@@eqncr{\let\@tempa\relax
    \ifcase\@eqcnt \def\@tempa{& & &}\or \def\@tempa{& &}%
      \else \def\@tempa{&}\fi
     \@tempa
     \if@eqnsw
        \iftag@
           \@taggnum
        \else
           \@eqnnum\stepcounter{equation}%
        \fi
     \fi
     \global\tag@false
     \global\@eqnswtrue
     \global\@eqcnt\z@\cr}
\def\TCItag{\@ifnextchar*{\@TCItagstar}{\@TCItag}}
\def\@TCItag#1{%
    \global\tag@true
    \global\def\@taggnum{(#1)}}
\def\@TCItagstar*#1{%
    \global\tag@true
    \global\def\@taggnum{#1}}
\def\binom#1#2{{#1 \choose #2}}%
\def\QATOPD#1#2#3#4{{#3 \atopwithdelims#1#2 #4}}%
\let\DOTSI\relax
\def\RIfM@{\relax\ifmmode}%
\def\FN@{\futurelet\next}%
\def\iint{\DOTSI\intno@\tw@\FN@\ints@}%
\def\iiint{\DOTSI\intno@\thr@@\FN@\ints@}%
\def\iiiint{\DOTSI\intno@4 \FN@\ints@}%
\def\idotsint{\DOTSI\intno@\z@\FN@\ints@}%
\def\ints@{\findlimits@\ints@@}%
\newif\iflimtoken@
\newif\iflimits@
\def\findlimits@{\limtoken@true\ifx\next\limits\limits@true
 \else\ifx\next\nolimits\limits@false\else
 \limtoken@false\ifx\ilimits@\nolimits\limits@false\else
 \ifinner\limits@false\else\limits@true\fi\fi\fi\fi}%
\def\multint@{\int\ifnum\intno@=\z@\intdots@                          
 \else\intkern@\fi                                                    
 \ifnum\intno@>\tw@\int\intkern@\fi                                   
 \ifnum\intno@>\thr@@\int\intkern@\fi                                 
 \int}
\def\multintlimits@{\intop\ifnum\intno@=\z@\intdots@\else\intkern@\fi
 \ifnum\intno@>\tw@\intop\intkern@\fi
 \ifnum\intno@>\thr@@\intop\intkern@\fi\intop}%
\def\intic@{%
    \mathchoice{\hskip.5em}{\hskip.4em}{\hskip.4em}{\hskip.4em}}%
\def\negintic@{\mathchoice
 {\hskip-.5em}{\hskip-.4em}{\hskip-.4em}{\hskip-.4em}}%
\def\ints@@{\iflimtoken@                                              
 \def\ints@@@{\iflimits@\negintic@
   \mathop{\intic@\multintlimits@}\limits                             
  \else\multint@\nolimits\fi                                          
  \eat@}
 \else                                                                
 \def\ints@@@{\iflimits@\negintic@
  \mathop{\intic@\multintlimits@}\limits\else
  \multint@\nolimits\fi}\fi\ints@@@}%
\def\intkern@{\mathchoice{\!\!\!}{\!\!}{\!\!}{\!\!}}%
\def\plaincdots@{\mathinner{\cdotp\cdotp\cdotp}}%
\def\intdots@{\mathchoice{\plaincdots@}%
 {{\cdotp}\mkern1.5mu{\cdotp}\mkern1.5mu{\cdotp}}%
 {{\cdotp}\mkern1mu{\cdotp}\mkern1mu{\cdotp}}%
 {{\cdotp}\mkern1mu{\cdotp}\mkern1mu{\cdotp}}}%
\def\RIfM@{\relax\protect\ifmmode}
\def\text{\RIfM@\expandafter\text@\else\expandafter\mbox\fi}
\let\nfss@text\text
\def\text@#1{\mathchoice
   {\textdef@\displaystyle\f@size{#1}}%
   {\textdef@\textstyle\tf@size{\firstchoice@false #1}}%
   {\textdef@\textstyle\sf@size{\firstchoice@false #1}}%
   {\textdef@\textstyle \ssf@size{\firstchoice@false #1}}%
   \glb@settings}
\def\textdef@#1#2#3{\hbox{{%
                    \everymath{#1}%
                    \let\f@size#2\selectfont
                    #3}}}
\newif\iffirstchoice@
\def\Let@{\relax\iffalse{\fi\let\\=\cr\iffalse}\fi}%
\def\vspace@{\def\vspace##1{\crcr\noalign{\vskip##1\relax}}}%
\def\multilimits@{\bgroup\vspace@\Let@
 \baselineskip\fontdimen10 \scriptfont\tw@
 \advance\baselineskip\fontdimen12 \scriptfont\tw@
 \lineskip\thr@@\fontdimen8 \scriptfont\thr@@
 \lineskiplimit\lineskip
 \vbox\bgroup\ialign\bgroup\hfil$\m@th\scriptstyle{##}$\hfil\crcr}%
\def\Sb{_\multilimits@}%
\def\endSb{\crcr\egroup\egroup\egroup}%
\def\Sp{^\multilimits@}%
\newdimen\ex@
\def\rightarrowfill@#1{$#1\m@th\mathord-\mkern-6mu\cleaders
 \hbox{$#1\mkern-2mu\mathord-\mkern-2mu$}\hfill
 \mkern-6mu\mathord\rightarrow$}%
\def\leftarrowfill@#1{$#1\m@th\mathord\leftarrow\mkern-6mu\cleaders
 \hbox{$#1\mkern-2mu\mathord-\mkern-2mu$}\hfill\mkern-6mu\mathord-$}%
\def\leftrightarrowfill@#1{$#1\m@th\mathord\leftarrow
\mkern-6mu\cleaders
 \hbox{$#1\mkern-2mu\mathord-\mkern-2mu$}\hfill
 \mkern-6mu\mathord\rightarrow$}%
\def\overrightarrow{\mathpalette\overrightarrow@}%
\def\overrightarrow@#1#2{\vbox{\ialign{##\crcr\rightarrowfill@#1\crcr
 \noalign{\kern-\ex@\nointerlineskip}$\m@th\hfil#1#2\hfil$\crcr}}}%
\def\overleftarrow{\mathpalette\overleftarrow@}%
\def\overleftarrow@#1#2{\vbox{\ialign{##\crcr\leftarrowfill@#1\crcr
 \noalign{\kern-\ex@\nointerlineskip}$\m@th\hfil#1#2\hfil$\crcr}}}%
\def\overleftrightarrow{\mathpalette\overleftrightarrow@}%
\def\overleftrightarrow@#1#2{\vbox{\ialign{##\crcr
   \leftrightarrowfill@#1\crcr
 \noalign{\kern-\ex@\nointerlineskip}$\m@th\hfil#1#2\hfil$\crcr}}}%
\def\underrightarrow{\mathpalette\underrightarrow@}%
\def\underrightarrow@#1#2{\vtop{\ialign{##\crcr$\m@th\hfil#1#2\hfil
  $\crcr\noalign{\nointerlineskip}\rightarrowfill@#1\crcr}}}%
\def\underleftarrow{\mathpalette\underleftarrow@}%
\def\underleftarrow@#1#2{\vtop{\ialign{##\crcr$\m@th\hfil#1#2\hfil
  $\crcr\noalign{\nointerlineskip}\leftarrowfill@#1\crcr}}}%
\def\underleftrightarrow{\mathpalette\underleftrightarrow@}%
\def\underleftrightarrow@#1#2{\vtop{\ialign{##\crcr$\m@th
  \hfil#1#2\hfil$\crcr
 \noalign{\nointerlineskip}\leftrightarrowfill@#1\crcr}}}%
\def\qopnamewl@#1{\mathop{\operator@font#1}\nlimits@}
\let\nlimits@\displaylimits
\def\setboxz@h{\setbox\z@\hbox}
\def\varlim@#1#2{\mathop{\vtop{\ialign{##\crcr
 \hfil$#1\m@th\operator@font lim$\hfil\crcr
 \noalign{\nointerlineskip}#2#1\crcr
 \noalign{\nointerlineskip\kern-\ex@}\crcr}}}}
 \def\rightarrowfill@#1{\m@th\setboxz@h{$#1-$}\ht\z@\z@
  $#1\copy\z@\mkern-6mu\cleaders
  \hbox{$#1\mkern-2mu\box\z@\mkern-2mu$}\hfill
  \mkern-6mu\mathord\rightarrow$}
\def\leftarrowfill@#1{\m@th\setboxz@h{$#1-$}\ht\z@\z@
  $#1\mathord\leftarrow\mkern-6mu\cleaders
  \hbox{$#1\mkern-2mu\copy\z@\mkern-2mu$}\hfill
  \mkern-6mu\box\z@$}
\def\projlim{\qopnamewl@{proj\,lim}}
\def\injlim{\qopnamewl@{inj\,lim}}
\def\varinjlim{\mathpalette\varlim@\rightarrowfill@}
\def\varprojlim{\mathpalette\varlim@\leftarrowfill@}
\def\varliminf{\mathpalette\varliminf@{}}
\def\varliminf@#1{\mathop{\underline{\vrule\@depth.2\ex@\@width\z@
   \hbox{$#1\m@th\operator@font lim$}}}}
\def\varlimsup{\mathpalette\varlimsup@{}}
\def\varlimsup@#1{\mathop{\overline
  {\hbox{$#1\m@th\operator@font lim$}}}}
\def\align{\@verbatim \frenchspacing\@vobeyspaces \@alignverbatim
You are using the "align" environment in a style in which it is not defined.}
\let\csname endalign*\endcsname =\endtrivlist
\def\alignat{\@verbatim \frenchspacing\@vobeyspaces \@alignatverbatim
You are using the "alignat" environment in a style in which it is not defined.}
\let\csname endalignat*\endcsname =\endtrivlist
\def\xalignat{\@verbatim \frenchspacing\@vobeyspaces \@xalignatverbatim
You are using the "xalignat" environment in a style in which it is not defined.}
\let\csname endxalignat*\endcsname =\endtrivlist
\def\gather{\@verbatim \frenchspacing\@vobeyspaces \@gatherverbatim
You are using the "gather" environment in a style in which it is not defined.}
\let\csname endgather*\endcsname =\endtrivlist
\def\multiline{\@verbatim \frenchspacing\@vobeyspaces \@multilineverbatim
You are using the "multiline" environment in a style in which it is not defined.}
\let\csname endmultiline*\endcsname =\endtrivlist
\def\arrax{\@verbatim \frenchspacing\@vobeyspaces \@arraxverbatim
You are using a type of "array" construct that is only allowed in AmS-LaTeX.}
\def\tabulax{\@verbatim \frenchspacing\@vobeyspaces \@tabulaxverbatim
You are using a type of "tabular" construct that is only allowed in AmS-LaTeX.}
\let\csname endarrax*\endcsname =\endtrivlist
\let\csname endtabulax*\endcsname =\endtrivlist
 \def\endequation{%
     \ifmmode\ifinner 
      \iftag@
        \addtocounter{equation}{-1} 
        $\hfil
           \displaywidth\linewidth\@taggnum\egroup \endtrivlist
        \global\tag@false
        \global\@ignoretrue   
      \else
        $\hfil
           \displaywidth\linewidth\@eqnnum\egroup \endtrivlist
        \global\tag@false
        \global\@ignoretrue 
      \fi
     \else   
      \iftag@
        \addtocounter{equation}{-1} 
        \eqno \hbox{\@taggnum}
        \global\tag@false%
        $$\global\@ignoretrue
      \else
        \eqno \hbox{\@eqnnum}
        $$\global\@ignoretrue
      \fi
     \fi\fi
 } 
 \newif\iftag@ \tag@false
 \def\TCItag{\@ifnextchar*{\@TCItagstar}{\@TCItag}}
 \def\@TCItag#1{%
     \global\tag@true
     \global\def\@taggnum{(#1)}}
 \def\@TCItagstar*#1{%
     \global\tag@true
     \global\def\@taggnum{#1}}
     \def\tag{\@ifnextchar*{\@tagstar}{\@tag}}
     \def\@tag#1{%
         \global\tag@true
         \global\def\@taggnum{(#1)}}
     \def\@tagstar*#1{%
         \global\tag@true
         \global\def\@taggnum{#1}}
\begin{document}

\title{\vspace{-40pt}A Note on the Estimation of Job Amenities and Labor
Productivity\thanks{%
We thank two anonymous referees and the editor Chris Taber for insightful
comments.}}
\author{Arnaud Dupuy\thanks{%
CREA, University of Luxembourg and IZA, \href{mailto:arnaud.dupuy@uni.lu}{%
arnaud.dupuy@uni.lu}. Dupuy gratefully acknowledges the support of a FNR
grant C14/SC/8337045.} \and Alfred Galichon\thanks{%
Departments of Economics and of Mathematics, New York University, and Department of Economics, Sciences-Po, \href{mailto: ag133@nyu.edu}%
{ag133@nyu.edu}. Galichon gratefully acknowledges the support of NSF grant
DMS-1716489 and European Research Council Grant no. 866274.}}
\date{First circulated version: October, 2015. This version: July, 2021}
\maketitle

\begin{abstract}
This paper introduces a maximum likelihood estimator of the value of job
amenities and labor productivity in a single matching market based on the
observation of equilibrium matches and wages. The estimation procedure
simultaneously fits both the matching patterns and the wage curve. While our
estimator is suited for a wide range of assignment problems, we provide an
application to the estimation of the Value of a Statistical Life using
compensating wage differentials for the risk of fatal injury on the job.
Using US data for 2017, we estimate the Value of Statistical Life at \$%
   6.3million (\$2017).
\end{abstract}

\vspace{-15pt}

\indent{\footnotesize \textbf{Keywords}: Matching, Observed transfers, Structural
estimation, Value of Statistical Life.}

\indent{\footnotesize \textbf{JEL Classification}: C35, C78, J31.\vskip50pt }

\newpage

\section{Introduction}

Identification and estimation of both agents' value of a match in one-to-one
matching models with transferable utility have been the subject of
increasing interest in the last decade. Two important applications are in
the fields of family economics with the marriage market (where the
econometrician observes matching patterns, but not the transfers) and labor
economics with the labor market, or more generally the literature on hedonic
models (where the econometrician observes both the matching patterns and the
transfers), although the former has thus far received most of the attention.

In the case when transfers are not observed, thus in the case of the
marriage literature, Choo and Siow (2006) is a seminal reference which
allowed to bring theoretical models to the data. Subsequent references such
as Chiappori et al. (2015), Galichon and Salani\'{e} (2021) and Dupuy and
Galichon (2014) have extended the structure of the model in various
dimensions. In particular, Dupuy and Galichon (2014) have provided a
framework for estimation of a matching model where agents match on
continuous characteristics, which they have applied to marriage market data.

In the case when transfers are observed, however, transfers may potentially
provide useful supplementary information about the partners' values of a
match. In the analysis of the labor market, for example, wages may be
observed. The literature referred to above is not very explicit on how this
information may be used. Many authors, such as Ekeland et al. (2004),
Heckman et al. (2010) and Galichon and Salani\'{e} (2021), among others,
suggest techniques that implicitly or explicitly require to perform
nonparametric estimation (\textquotedblleft hedonic
regression\textquotedblright ) of the wage curve prior to the analysis.
While this works well in the case when the relevant characteristics is
single-dimensional, as in Ekeland et al. (2004) and Heckman et al. (2010),
or discrete, as in Galichon and Salani\'{e} (2021), this is more involved
when the characteristics are continuous and multivariate. In this framework,
Salani\'{e} (2015) shows that this structure implies quite strong testable
restrictions.

In this paper, we build a flexible and tractable model of equilibrium
matching and wages on the labor market, and show how to estimate the model
using a maximum likelihood approach. This work therefore extends our
previous work, Dupuy and Galichon (2014), to the case when transfers are
observed.

We illustrate our method by revisiting the literature on compensating wage
differentials (CDW) initiated as an application of Rosen's (1974) hedonic
model to the labor market by Lucas (1977) and Thaler and Rosen (1976), and
soon followed by many others; see Rosen (1986) for an elegant presentation
of the theory and a review of the early empirical literature, and Viscusi
and Aldy (2003) for a more recent review of the empirical literature. The
approach in this vein consists in performing the reduced form estimation of
the risk-wage gradient to uncover workers' marginal willingness to accept
certain levels of fatal injury risk at their job and herewith derive an
estimate of the Value of Statistical Life. A crucial assumption of this
approach is that the data contains rich enough information about a worker's
skills to control for wage differentials due to productivity differentials
across workers. Departure from this assumption implies an inherent bias in
estimates of the compensating wage differentials, and Hwang et al. (1992)
have shown this bias can be large in magnitude. Attempts to avoid this bias
have consisted in either using panel data to estimate workers' fixed effects
and control for unobserved heterogeneity (see for instance Brown, 1980) or
an instrumental variable approach (see for instance Garen, 1988). More
recently, Kniesner et al. (2007) have argued in favor of adopting a
structural hedonic model to identify the ``underlying fundamentals
(preferences), [...] that would further generalize estimates of Value of
Statistical Life''.

To the extent of our knowledge, however, the structural hedonic model
approach has been largely ignored in the applied literature. Our method
contributes to this discussion by proposing a structural estimation of
preferences for risky jobs that explicitly takes into account the matching
of workers to jobs while estimating the equilibrium hedonic wage equation.
This method comes at no cost on the data since information about who matches
with whom is, by definition, already available in the data needed to perform
the hedonic wage regression in the first place. Accounting for the matching
of workers to jobs results in the likelihood of observing the data given
parameters being expressed as a weighted sum of two terms: the contribution
of the first term is to equate the predicted moments of the matching
distributions to their sample counterparts whereas the contribution of the
second is to equate the predicted wages with their sample counterparts.

Following Viscusi (2003, 2007 and 2013), we use US data and merge the Census
of Fatal Occupational Injuries (CFOI) by occupation and industry to the 2017
Current Population Survey (CPS). This allows us to have access to data on
hourly wages, workers' characteristics and the rate of fatal injuries in
their job. Our main results quantify the extent to which US workers dislike
risky jobs, their utility dropping by  0.023
log-points per hour of work as the probability of fatal injury on the job
increases by  13.05per $100,000$. This amounts to a Value
of Statistical Life (VSL hereafter) of \$million
(\$2017). This estimate is about \$3 million lower, though not statistically
so, than the estimate obtained when applying a hedonic (log)wage regression
that does not account explicitly for the sorting of workers into jobs, i.e.
\$9.7 million (\$2017), which itself lies in the range of previous estimates
using similar data (e.g. Viscusi, 2013).

Our model is also related to a growing empirical literature applying the
celebrated estimation technique proposed in Abowd et al. (1999) to decompose
workers' wage differentials into differentials due to observed workers'
characteristics, unobserved workers' heterogeneity and firms' heterogeneity
using matched employer-employee panel data, see among others Abowd et al.
(2002), Andrews et al. (2008;2012), Gruetter and Lalive (2009), Woodcock
(2010) and Torres et al. (2013). Workers and firms fixed effects capture
reduced form notions of workers and firms types that are fixed over time and
are identified using the mobility of workers across firms over time.

The outline for the rest of the paper is as follows. Section 2 introduces
the model and characterizes equilibrium. Section 3 presents our parametric
specification of the model and a maximum likelihood estimator on data about
matches and wages. Section 4 presents the empirical application and section
5 summarizes and concludes.

\section{The model}

The purpose of this section is to succinctly present our model, which is a
bipartite continuous matching model with transferable utility and logit
unobserved heterogeneity. In our context, equilibrium transfers (wages) are
observed. This is relevant for instance in the labor market, as opposed to
the marriage market where transfers are typically unobserved. We limit
ourselves to the introduction of the notation needed for the construction of
our estimator, emphasize on the additional identification and estimation
results obtained when transfers are observed and refer the interested reader
to the original paper, i.e. Dupuy and Galichon (2014), for more details
about its other main properties. To fix
ideas, we use in the remainder of the paper the example of the labor market where transfers (wages) are observed, in line with the application of Section 4.

\textbf{Populations and matching}. We shall assume that workers'
characteristics are contained in a vector of attributes $x\in \mathcal{X}=%
\mathbb{R}^{d_{x}}$, while firms' characteristics are captured by a vector
of attributes $y\in \mathcal{Y}=\mathbb{R}^{d_{y}}$.\footnote{\label%
{fnt:OneToMany}In our one-to-one matching model, the terms \textquotedblleft
job\textquotedblright\ and \textquotedblleft firm\textquotedblright\ are
interchangeable. Our model would continue to work in a one-to-many context
where firms offer multiple jobs as long as there is perfect substitutability
between jobs (workers) within firms, i.e. as long as the surplus of a firm
is the sum of the surplus at each job (of each worker) in the firm. Our
model can therefore be seen as a matching model of jobs to workers within
firms under perfect substitutability. A recent application of this idea is
found in a model of polygamy in the marriage market context by Andr\'{e} and
Dupraz (2017).}

Our first main assumption is about the distribution of workers' and firms'
types in the economy.

\begin{condition}\label{Ass1} There is a continuum of workers, with a density of
type distribution $f$ on $\mathbb{R}^{d_{x}}$, and a continuum of firms,
with a density of type distribution $g$ on $\mathbb{R}^{d_{y}}$. There is
the same total mass of workers and firms, and this mass is normalized to
one, hence%
\begin{equation*}
\int_{\mathbb{R}^{d_{x}}}f\left( x\right) dx=\int_{\mathbb{R}%
^{d_{y}}}g\left( y\right) dy=1.
\end{equation*}
\end{condition}

Since workers of type $x$ have a density of probability $f\left( x\right) $,
and firms of type $y$ have a density of probability $g\left( y\right) $ and
workers and firms are in equal number, a \emph{feasible matching} between
workers and firms will consist in the probability density $\pi \left(
x,y\right) $ of occurrence of a $\left( x,y\right) $ pair, which should have
marginal densities $f$ and $g$. More formally, we define the set of feasible matchings as%
\begin{equation*}
\mathcal{M}\left( f,g\right) =\left\{ \pi :\pi \left( x,y\right) \geq 0\text{%
, }\int_{\mathcal{Y}}\pi \left( x,y\right) dy=f\left( x\right) \text{ and }%
\int_{\mathcal{X}}\pi \left( x,y\right) dx=g\left( y\right) \right\} .
\end{equation*}

\bigskip

\textbf{Demand and supply}. Let $w\left( x,y\right) $ denote the wage of a
worker of type $x$ when working for a firm of type $y$. It is assumed that a
worker of type $x$ not only values her wage but also
the amenities of her job. The value of amenities is
further assumed to be decomposed into a systematic value $\alpha \left(
x,y\right) $, which is the same for all workers of type $x$, and a random
value $\varepsilon \left( y\right) $ that is specific to a particular
worker, holds for all firms of a given type $y$ and is known by the worker
at the time the matching occurs. This specification therefore contrasts with
the literature on search and matching that typically introduces a
match-specific shock that is revealed after the matching occurred.

In particular, we assume the value for a worker of type $x$ of working for a firm of
type $y$ at wage $w\left( x,y\right) $ is given by $\alpha
\left( x,y\right) +w\left( x,y\right)
+\sigma _{1}\varepsilon \left( y\right) $ where $\alpha
\left( x,y\right) +w\left( x,y\right) $ is deterministic, $\varepsilon
\left( y\right) $ is a worker-specific random process and $\sigma _{1}$ is a scaling factor. As in Dupuy and
Galichon (2014), we choose to model the random process $\varepsilon \left(
y\right) $ as a \emph{Gumbel random process}, introduced by Cosslett (1988)
and Dagsvik (1988), which is constructed as follows.\footnote{See appendix~\ref{app:ContinuousLogit} for more details about Gumbel random processes.} Assume
that, in a first step, workers form their demand by drawing a random pool of observable types
of firms, along with the corresponding utility shocks. We model the random
pool by a Poisson point process, so that its cardinality does not have to be
fixed and finite. More specifically, we assume that this Poisson process is
valued in $\mathcal{Y\times
\mathbb{R}
}$ with intensity $dye^{-\varepsilon }d\varepsilon $. The random pool sampled by a
worker is $\left\{ \left( y_{k},\varepsilon _{k}\right) ,k\in
\mathbb{N}
\right\} $, where $y_{k}$ is the type and $%
\varepsilon _{k}$ the corresponding utility shock. Define $%
\varepsilon \left( y\right) =\max_{k}\left\{ \varepsilon
_{k}:y_{k}=y\right\} $, with the convention that $\max \emptyset =-\infty $.

By construction, the problem of a utility-maximizing worker of type $x$ reads as
\begin{equation*}
\max_{{y}\in \mathcal{Y}} \left\{ \alpha \left( x,y\right) +w\left( x,y\right) +\sigma
_{1}\varepsilon \left( y\right) \right\} .
\end{equation*}

Note that workers' preferences only depend on their potential partner's type. Once the desired type has been determined, workers are indifferent between firms of that type.

By symmetry, we assume the value for a firm of type $y$ of hiring a worker of
type $x$ at wage $w\left( x,y\right) $ is given by $\gamma
\left( x,y\right) -w\left( x,y\right)
+\sigma _{2}\eta \left( x\right) $ where $\gamma
\left( x,y\right) -w\left( x,y\right) $ is deterministic, and $\eta
\left( x\right) $ is a firm-specific Gumbel random process.

Assuming that $\varepsilon \left( y\right)$ and $\eta \left( x\right)$ follow Gumbel random processes allows us to get a continuous logit framework. Indeed, Proposition~\ref{prop:continuousLogit} in the appendix, which was obtained by Cosslett (1988) and
Dagsvik (1988), shows that the density of demand for firms of type $y$ originating from workers of type $x$, is proportional to $\exp(\alpha(x,y)+w(x,y))$. This leads us assuming continuous logit demands from workers and firms, which we formalize in the following assumption.

\begin{condition}[Continuous logit demands]\label{Ass2}
All agents are price-takers and utility maximizers and given the wage
schedule $w\left( x,y\right) $, the conditional density demand for firms of type $y$ by workers of type $x$ is

\begin{equation}
\pi \left( y|x\right) =\frac{\exp \left( \frac{\alpha \left( x,y\right)
+w\left( x,y\right) }{\sigma _{1}}\right) }{\int_{\mathcal{Y}}\exp \left(
\frac{\alpha \left( x,y^{\prime }\right) +w\left( x,y^{\prime }\right) }{%
\sigma _{1}}\right) dy^{\prime }}.
\label{optimalityW}
\end{equation}%

Symmetrically, the conditional density demand for workers of type $x$ by firms of type $y$ is
\begin{equation}
\pi \left( x|y\right) =\frac{\exp \left( \frac{\gamma \left( x,y\right)
-w\left( x,y\right) }{\sigma _{2}}\right) }{\int_{\mathcal{X}}\exp \left(
\frac{\gamma \left( x^{\prime },y\right) -w\left( x^{\prime },y\right) }{%
\sigma _{2}}\right) dx^{\prime }}.
\label{optimalityF}
\end{equation}%

\end{condition}

Note that the conditional density demands rewrite as
\begin{equation}
\pi \left( y|x\right) = \exp \left( \frac{\alpha \left( x,y\right)
+w\left( x,y\right) -u\left( x\right) }{\sigma _{1}}\right)
\label{optimalityW2}
\end{equation}%
where $u\left( x\right) $ given by%
\begin{equation}
u\left( x\right) =\sigma _{1}\log \int_{\mathcal{Y}}\exp \left( \frac{\alpha
\left( x,y^{\prime }\right) +w\left( x,y^{\prime }\right) }{\sigma _{1}}%
\right) dy^{\prime }  \label{expru}
\end{equation}
interprets in the Gumbel framework as the expected indirect utility of a worker of type $x$ and symmetrically,
\begin{equation}
\pi \left( x|y\right) =\exp \left( \frac{\gamma \left( x,y\right)
-w\left( x,y\right) -v\left( y\right) }{\sigma _{2}}\right)
\label{optimalityF2}
\end{equation}%
where $v\left( y\right) $ given by%
\begin{equation}
v\left( y\right) =\sigma _{2}\log \int_{\mathcal{X}}\exp \left( \frac{\gamma
\left( x^{\prime },y\right) -w\left( x^{\prime },y\right) }{\sigma _{2}}%
\right) dx^{\prime }  \label{exprv}
\end{equation}
interprets as the expected indirect profits of a firm of type $y$.

In a second step, agents determine equilibrium by tatonnement over $w(x,y)$ using the demand functions defined in step 1. Note that since workers (resp. firms) are indifferent between firms (workers) of the same type, once the desired type has been determined, workers (resp. firms) can match with any of the firms (workers) of that type.

Note also that agents need not to observe other agents' idiosyncratic shocks $%
\varepsilon $ and $\eta $ to form their demand. Even if they had access to
that information, they would not use it.

\bigskip

\textbf{Equilibrium}. At equilibrium, the wage curve $w\left( x,y\right) $
is such that the density $\pi \left( x,y\right) $ of pairs $\left(
x,y\right) $ emanating from the workers' problem coincides with the density
of pairs $\left( x,y\right) $ emanating from the firms' problem, and hence must satisfy
\begin{equation}
\exp \left( \frac{\alpha \left( x,y\right) +w\left( x,y\right) -a\left(
x\right) }{\sigma _{1}}\right) =\pi \left( x,y\right) =\exp \left( \frac{%
\gamma \left( x,y\right) -w\left( x,y\right) -b\left( y\right) }{\sigma _{2}}%
\right),  \label{system_a_b}
\end{equation}


where
\begin{equation}
\QATOPD\{. {a\left( x\right) =u\left( x\right) -\sigma _{1}\log f\left(
x\right) }{b\left( y\right) =v\left( y\right) -\sigma _{2}\log g\left(
y\right) }.  \label{Exp_ab}
\end{equation}%

Substituting out $w\left( x,y\right) $ in system~(\ref{system_a_b}) yields%
\begin{equation}
\pi \left( x,y\right) =\exp \left( \frac{\phi \left( x,y\right) -a\left(
x\right) -b\left( y\right) }{\sigma }\right) ,  \label{OptMatch}
\end{equation}%
where $\sigma :=\sigma _{1}+\sigma _{2}$, while substituting out $\pi \left(
x,y\right) $ yields%
\begin{equation}
w\left( x,y\right) =\frac{\sigma _{1}}{\sigma }\left( \gamma \left(
x,y\right) -b\left( y\right) \right) +\frac{\sigma _{2}}{\sigma }\left(
a\left( x\right) -\alpha \left( x,y\right) \right) .  \label{EqWage}
\end{equation}

Note that $\sigma$ is the amount of heterogeneity
in the model and when the scaling factors of the random values $\sigma _{1}$
and $\sigma _{2}$ tend to zero, i.e. there is no heterogeneity in the model $%
\sigma \rightarrow 0$, the firm's problem and the worker's problem converge
to the deterministic maximization problems%
\begin{equation*}
u\left( x\right) =\max_{y\in \mathcal{Y}}\left\{ \alpha \left( x,y\right)
+w\left( x,y\right) \right\} \text{ and }v\left( y\right) =\max_{x\in
\mathcal{X}}\left\{ \gamma \left( x,y\right) -w\left( x,y\right) \right\}
\end{equation*}%
and the equilibrium problem consists in finding $w\left( x,y\right) $ and $%
\pi \left( x,y\right) $ which are compatible with optimality in these
problems. See section~\ref{rk:convergenceToZero} below.

Note also that in our model, equilibrium wages do not vary systematically with the idiosyncratic shocks of workers and firms.
This is because the non-wage valuation of a job at a firm of type $y$ by a worker of type $x$
is given as $\alpha \left( x,y\right) +\sigma _{1}\varepsilon \left(
y\right) $ and only depends on the observable characteristics of the firm, not
the firm's idiosyncratic shock (i.e. the process $\eta $). Symmetrically,
the productivity of a job at a firm of type $y$ when performed by a worker
of type $x$ is given as $\gamma \left( x,y\right) +\sigma _{2}\eta \left(
x\right) $, only depends on the observable characteristics of the worker,
not her idiosyncratic shock (i.e. the process $\varepsilon $). Hence, by the law of
one price, this implies that the wage of any worker of type $x$ must be the
same at every firm of the same type $y$.

We can now formally define an equilibrium outcome on this market.

\begin{definition}[Equilibrium outcome]
An equilibrium outcome $\left( \pi ,w\right) $ consists of an equilibrium
matching $\pi \left( x,y\right) $, and an equilibrium wage $w\left(
x,y\right) $ where there exist functions $a\left( x\right) $ and $b\left(
y\right) $ such that:

(i) matching $\pi $ is feasible: defined by~(\ref{OptMatch}) and $\pi \in \mathcal{M}\left( f,g\right) $, and

(ii) wage $w$ is defined by~(\ref{EqWage}).
\end{definition}

As a result, the equilibrium outcome problem consists of looking for
functions $a\left( x\right) $ and $b\left( y\right) $ that are solution to
the system
\begin{equation}
\left\{
\begin{array}{c}
\int_{\mathcal{Y}}\exp \left( \frac{\phi \left( x,y\right) -a\left( x\right)
-b\left( y\right) }{\sigma }\right) dy=f\left( x\right) \\
\int_{\mathcal{X}}\exp \left( \frac{\phi \left( x,y\right) -a\left( x\right)
-b\left( y\right) }{\sigma }\right) dx=g\left( y\right) .%
\end{array}%
\right.  \label{SBsyst}
\end{equation}

Two important remarks are in order.

\begin{remark}[Location normalization]
If $a(x)$ and $b(y)$ are solutions of system~(\ref{SBsyst}), so are $a(x)+t$
and $b(y)-t$. Using equation~(\ref{EqWage}), the equilibrium wages are $w\left( x,y\right) $ for
the former solution and $w\left( x,y\right) +t$ for the latter. The
nonuniqueness of the solution for system~(\ref{SBsyst}) requires a
normalization which is reflected by the arbitrary choice $a\left(
x_{0}\right) =0$ and a constant term $t$ in the equilibrium wages equation~(\ref%
{EqWage}). Uniqueness of such $\left( a,b\right) $ upon normalization $a\left( x_{0}\right) =0$
is proved in\ R\"{u}schendorf and Thomsen (1993), theorem 3.
\end{remark}

\begin{remark}[Continuous Mixed Logit demand]
It follows from formula~(\ref{optimalityW}) that the density of market
demand for firms of type $y$ is given by%
\begin{equation*}
\int_{\mathcal{X}}\frac{\exp \left( \frac{\alpha \left( x,y\right) +w\left(
x,y\right) }{\sigma _{1}}\right) }{\int_{\mathcal{Y}}\exp \left( \frac{%
\alpha \left( x,y^{\prime }\right) +w\left( x,y^{\prime }\right) }{\sigma
_{1}}\right) dy^{\prime }}dx
\end{equation*}%
which is a continuous Mixed Logit model. Likewise, the density of market
demand for workers of type $x$ has a similar expression. The equilibrium
wage $w\left( x,y\right) $ equates these quantities to the respective
densities of supply, $g(y)$ and $f(x)$ respectively.
\end{remark}

\section{Parametric estimation}

\subsection{Observations}

Assume that one has access to a random sample of the population of matches
of firms and workers. For each match, this sample contains information about
the worker's characteristics, her wage and the firm's characteristics. The
observations consist of $\left\{ \left( X_{i},Y_{i},W_{i}\right)
,i=1,...,n\right\} $, where $n$ is the number of observed matches, $i$
indexes an employer-employee match, $X_{i}$ and $Y_{i}$ \ are respectively
the vectors of employee's and employer's observable characteristics, which
are assumed to be sampled from a continuous distribution, and $W_{i}$ is a
noisy measure of the true unobserved transfer $w\left( X_{i},Y_{i}\right)$ assumed to be
such that%
\begin{equation}
W_{i}=w\left( X_{i},Y_{i}\right)+\epsilon _{i}  \label{eqWageObs}
\end{equation}%
where measurement error $\epsilon _{i}$ follows a $\mathcal{N}\left(
0,s^{2}\right) $ distribution and is independent of $\left(
X_{i},Y_{i}\right) $.

Note that depending on the nature of preferences, observed transfers $W_{i}$ can be a monotonic
transformation of observed wages. This flexibility allows one to consider
equation (\ref{eqWageObs}) as a hedonic wage regression using any known
monotonic transformation of wages, i.e. identity (to estimate the model in
levels), logarithm, power transformation etc.

Finally, note that, while we assume the analyst has access to data containing all
variables in $X$ and $Y$, in practice datasets only contain a
subset of these variables. In such a situation, the analyst faces
issues of unobserved heterogeneity that our current method does not account
for. However, while this is a current limitation of our approach, one can expect that existing methods dealing with unobserved heterogeneity would adapt to be incorporated into the framework. This is left for future research.

\subsection{Identification\label{rk:SeparateIdentification}}

In this section, we briefly discuss identification of the deterministic value
of amenities $\alpha $ and productivity $\gamma $. Not that $\alpha $ and $%
\gamma $ do not appear individually in the expression of the equilibrium
matches in equation~(\ref{OptMatch}), only the joint value of a match $\phi $
appears in this equation. However, $\alpha $ and $\gamma $ do appear
separately and with opposite signs in the formula of the equilibrium wages in
equation~(\ref{EqWage}).

This clearly indicates that when only matches are observed, one cannot
identify and hence estimate the deterministic value of amenities $\alpha $
separately from the deterministic value of productivity $\gamma $. In
contrast, if transfers are observed, one actually can identify and estimate
these objects separately.

It should be noted, however, that taking the values of $\sigma _{1}$ and $%
\sigma _{2}$ as known\footnote{The $\sigma$ parameters are not non-parametrically identified but they can be estimated using observed transfers once $\alpha(.,.)$ and $\gamma(.,.)$ have been parametrically specified.} and $\sigma _{1}+\sigma _{2}=1$ for notational
simplicity, equations~(\ref{optimalityW}) and~(\ref{optimalityF}) clearly
indicate that $\alpha \left( x,y\right) +w\left( x,y\right) $ is identified
up to a function $c\left( x\right) $ by $\sigma _{1}\ln \pi \left(
x,y\right) +c\left( x\right) $, and $\gamma \left( x,y\right) -w\left(
x,y\right) $ is identified up to a function $d\left( y\right) $ by $\sigma
_{2}\ln \pi \left( x,y\right) +d\left( y\right) $. It follows that $\alpha $
is identified up to fixed effects $c\left( x\right) $ by%
\begin{equation*}
\alpha \left( x,y\right) =\sigma _{1}\ln \pi \left( x,y\right) -w\left(
x,y\right) +c\left( x\right) ,
\end{equation*}%
while $\gamma $ is identified up to fixed effects $d\left( y\right) $ by%
\begin{equation*}
\gamma \left( x,y\right) =\sigma _{2}\ln \pi \left( x,y\right) +w\left(
x,y\right) +d\left( y\right) .
\end{equation*}

This result has been used in a nonparametric setting by Galichon and
Salani\'{e} (2021) and Salani\'{e} (2015). In this paper, we exploit it in
a parametric setting using basis functions of $x$ and $y$ (see Section %
\ref{par:param}). Indeed, since $\alpha $ is identified up to fixed effects $%
c\left( x\right) $, the parametrization of $\alpha $ can only include basis
functions depending on both $x$ and $y$ or on $y$ only but it cannot include
basis functions depending on $x$ only. By a similar reasoning, the
parametrization of $\gamma $ cannot include basis functions depending on $y$
only.

\subsection{Notation}

For the sake of readability and to avoid additional notational burden, we
propose the following change of notation. Replace $\alpha $ by $\sigma
\alpha $, $\gamma $ by $\sigma \gamma $, $\phi $ by $\sigma \phi $, $a$ by $%
\sigma a$, and $b$ by $\sigma b$, so that the equations of the model become%
\begin{equation}
\pi \left( x,y\right) =\exp \left( \phi \left( x,y\right) -a\left( x\right)
-b\left( y\right) \right) ,  \label{newLikelihood}
\end{equation}%
where $\left( a,b\right) $ is the unique solution to the system of equations
\begin{equation}
\left\{
\begin{array}{c}
\int_{\mathcal{Y}}\exp \left( \phi \left( x,y\right) -a\left( x\right)
-b\left( y\right) \right) dy=f\left( x\right) \\
\int_{\mathcal{X}}\exp \left( \phi \left( x,y\right) -a\left( x\right)
-b\left( y\right) \right) dx=g\left( y\right) ,%
\end{array}%
\right.  \label{newBernstein}
\end{equation}%
still normalized by $a\left( x_{0}\right) =0$, and the terms $a$ and $b$ are
related to $u$ and $v$ by
\begin{equation}
u\left( x\right) =\sigma a\left( x\right) +\sigma _{1}\log f\left( x\right)
+t\text{, and }v\left( y\right) =\sigma b\left( y\right) +\sigma _{2}\log
g\left( y\right) -t  \label{Welfare}
\end{equation}%
and the equilibrium transfer $w$ is given by
\begin{equation}
w\left( x,y\right) =\sigma _{1}\left( \gamma \left( x,y\right) -b\left(
y\right) \right) +\sigma _{2}\left( a\left( x\right) -\alpha \left(
x,y\right) \right) +t.  \label{newW}
\end{equation}

This change of notation is without loss of generality since from equation~(%
\ref{newW}) one can estimate parameters $\sigma _{1}$ and $\sigma _{2}$ and
hence $\sigma$ and therefore recover the initial values of $\alpha $ and $%
\gamma $. In the remainder of the paper, equations~(\ref{newLikelihood})--(%
\ref{newW}) will characterize the model to estimate.\footnote{%
The value of $\sigma$ is therefore not imposed but estimated. Note however that the non-negativity of $\sigma _{1}$ and $\sigma _{2}$ should be imposed as a
constraint, as in the application below.}

\subsection{Parametrization\label{par:param}}

Let $A$ and $\Gamma $ be two vectors of $\mathbb{R}^{K}$ parameterizing the
function of workers' systematic value of job amenities $\alpha $ and the
function of firms' systematic value of productivity $\gamma $, in a linear
way, so that
\begin{equation*}
\alpha (x,y;A)=\sum_{k=1}^{K}A_{k}\varphi _{k}(x,y)\text{, and }\gamma
(x,y;\Gamma )=\sum_{k=1}^{K}\Gamma _{k}\varphi _{k}(x,y),
\end{equation*}%
where the basis functions $\varphi _{k}$ are linearly independent, and may
include functions that depend on $x$ (respectively $y$) only. Note that by
definition, the function of the joint value of a match reads as%
\begin{equation}
\phi (x,y;\Phi )=\sum_{k=1}^{K}\Phi _{k}\varphi _{k}(x,y),  \label{phiparam}
\end{equation}%
where $\Phi _{k}=A_{k}+\Gamma _{k}$. Inspection of equation~(\ref{newW})
reveals that, given the parametric choice above, equilibrium matching and
transfers are parameterized by $\left( A,\Gamma ,\sigma _{1},\sigma
_{2},t\right) $. The model is hence fully parameterized by $\theta =\left(
A,\Gamma ,\sigma _{1},\sigma _{2},t,s^{2}\right) $, which we make explicit
by writing the predicted equilibrium transfer as $w(x,y;\theta )$.

\subsection{Estimation}

The main purpose of this exercise is to estimate the vector of parameters $%
\theta $. To this aim we adopt a maximum likelihood approach. It follows
from section~(\ref{rk:SeparateIdentification}) that the likelihood of
observing a pair $\left( x,y\right) $ only depends on $\Phi =A+\Gamma $, and
is given by%
\begin{equation*}
\pi (x,y;\Phi )=\exp \left( \phi \left( x,y;\Phi \right) -a(x;\Phi
)-b(y;\Phi )\right) ,
\end{equation*}%
where $a(x;\Phi )$ and $b(y;\Phi )$ are uniquely determined by system of
equations~(\ref{newBernstein}). Since, by assumption, measurement errors in
transfers are independent of $(X,Y)$, the log-likelihood of an observation $%
\left( x,y,w\right) $ at parameter $\theta $\ is therefore%
\begin{equation*}
\log L\left( x,y,w;\theta \right) =\log \pi \left( x,y;\Phi \right) -\frac{%
\left( w-w\left( x,y;\theta \right) \right) ^{2}}{2s^{2}}-\frac{1}{2}\log
s^{2},
\end{equation*}%
and hence, the log-likelihood of the sample reads as:%
\begin{equation}
\log L\left( \theta \right) =n\mathbb{E}_{\hat{\pi}}\left[ \phi \left(
X,Y;\Phi \right) -a\left( X;\Phi \right) -b\left( Y;\Phi \right) -\frac{%
\left( W-w\left( X,Y;\theta \right) \right) ^{2}}{2s^{2}}\right] -\frac{n}{2}%
\log s^{2}  \label{LLtheo}
\end{equation}%
where $\hat{\pi}\left( x,y\right) $ is the observed density of matches in
the data.

However, note that $a$, $b$ and $w$ that appear in~(\ref{LLtheo}) are
computed in the population; here, we only have access to a sample. So,
denoting $a_{i}$ and $b_{j}$ the sample analog of $a(x)$ and $b(x)$, we
compute the sample analog of system~(\ref{newBernstein})
\begin{equation}
\left\{
\begin{array}{c}
\sum_{j=1}^{n}\exp \left( \phi _{ij}\left( \Phi \right) -a_{i}-b_{j}\right)
=1/n,\forall i=1,...,n \\
\sum_{i=1}^{n}\exp \left( \phi _{ij}\left( \Phi \right) -a_{i}-b_{j}\right)
=1/n,\forall j=1,...,n%
\end{array}%
\right.  \label{SBsyst_sample}
\end{equation}%
with the added normalization $a_{1}=0$, which ensures uniqueness of the
solution.\footnote{%
An argument similar to theorem A.2 in Chernozhukov et al. (2017) would show
that the solution to the system~(\ref{SBsyst_sample}) converges uniformly to
$a$ and $b$ as computed in the population, i.e. solution of system~(\ref%
{newBernstein}).} (Note that since we have assumed that the population
distribution is continuous, each sampled observation occurs uniquely, hence
the right-hand side here is $1/n$; however, this could easily be extended to
a more general setting). We denote $\left( a_{i}\left( \Phi \right)
,b_{i}\left( \Phi \right) \right) $ this solution at $\Phi $. This allows us
to compute a sample estimate of the equilibrium transfer $w_{i}\left( \theta
\right) $ as%
\begin{equation}
w_{i}\left( \theta \right) :=\sigma _{1}\left( \gamma _{ii}\left( \Gamma
\right) -b_{i}\left( \Phi \right) \right) +\sigma _{2}\left( a_{i}\left(
\Phi \right) -\alpha _{ii}\left( A\right) \right) +t,  \label{eqWageSample}
\end{equation}%
where the notation $\alpha _{ij}\left( A\right) $ substitutes for $\alpha
\left( X_{i},Y_{j};A\right) $, and similarly for $\gamma _{ij}\left( \Gamma
\right) $.

We are thus able to give the expression of the log-likelihood of the sample
in our next result. Recall that $\theta =\left( A,\Gamma ,\sigma _{1},\sigma
_{2},t,s^{2}\right) $ and $\Phi =A+\Gamma $.

\begin{theorem}
\label{th:LogLikeSample}The log-likelihood of the sample is given by
\begin{equation}
\log \hat{L}\left( \theta \right) =\log \hat{L}_{1}\left( \theta \right)
+\log \hat{L}_{2}\left( \theta \right) ,  \label{eqlogLMEmp}
\end{equation}%
where%
\begin{equation}
\log \hat{L}_{1}\left( \theta \right) =\sum_{i=1}^{n}\left( \phi _{ii}\left(
\Phi \right) -a_{i}\left( \Phi \right) -b_{i}\left( \Phi \right) \right)
\label{ExpLL1}
\end{equation}%
and,%
\begin{equation}
\log \hat{L}_{2}\left( \theta \right) =-\sum_{i=1}^{n}\frac{\left(
W_{i}-w_{i}\left( \theta \right) \right) ^{2}}{2s^{2}}-\frac{n}{2}\log s^{2},
\label{ExpLL2}
\end{equation}%
where $\phi _{ij}\left( \Phi \right) :=\phi (X_{i},Y_{j};\Phi )$ is as in~(%
\ref{phiparam}), $a_{i}\left( \Phi \right) $ and $b_{i}\left( \Phi \right) $
are obtained as the solution of~(\ref{SBsyst_sample}), and where $%
w_{i}\left( \theta \right) $ is given by~(\ref{eqWageSample}).
\end{theorem}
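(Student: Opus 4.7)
The plan is to write the sample log-likelihood as a sum of per-observation contributions and then substitute the model's parametric expressions. By the i.i.d.\ sampling assumption and the independence of the measurement error $\epsilon_i$ from $(X_i,Y_i)$ stipulated in \eqref{eqWageObs}, the joint density of an observation $(X_i,Y_i,W_i)$ factors as the matching density $\pi(x,y;\Phi)$ times the Gaussian density $\tfrac{1}{\sqrt{2\pi s^2}}\exp\bigl(-(w-w(x,y;\theta))^2/(2s^2)\bigr)$ of the measurement error. Taking the log of this joint density at a single observation, and dropping the additive constant $-\tfrac{1}{2}\log(2\pi)$ which does not depend on $\theta$, yields
$$\log\pi(X_i,Y_i;\Phi)-\frac{(W_i-w(X_i,Y_i;\theta))^2}{2s^2}-\frac{1}{2}\log s^2.$$

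Next, I would invoke \eqref{newLikelihood} to rewrite $\log\pi(X_i,Y_i;\Phi)=\phi(X_i,Y_i;\Phi)-a(X_i;\Phi)-b(Y_i;\Phi)$, which in the notation of the theorem is $\phi_{ii}(\Phi)-a(X_i;\Phi)-b(Y_i;\Phi)$. The one substantive (as opposed to purely notational) step is then to replace the population quantities $a(\cdot;\Phi)$, $b(\cdot;\Phi)$ and $w(\cdot,\cdot;\theta)$, which solve the integral system \eqref{newBernstein} and satisfy \eqref{newW}, by their empirical analogs $a_i(\Phi)$, $b_i(\Phi)$ and $w_i(\theta)$ defined through the discretized system \eqref{SBsyst_sample} and the formula \eqref{eqWageSample}. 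This substitution is the main point of subtlety, and the footnote points to an argument in the style of Chernozhukov et al.\ (2017, Theorem A.2) guaranteeing uniform convergence of the sample solutions to the population ones.

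Finally, I would sum the per-observation log-likelihoods over $i=1,\ldots,n$ and group the matching-related terms into $\log\hat L_1(\theta)=\sum_i\bigl(\phi_{ii}(\Phi)-a_i(\Phi)-b_i(\Phi)\bigr)$ and the wage-fit terms into $\log\hat L_2(\theta)=-\sum_i(W_i-w_i(\theta))^2/(2s^2)-(n/2)\log s^2$. This yields the decomposition \eqref{eqlogLMEmp}. The statement is essentially a bookkeeping result; the only genuinely nontrivial element is the population-to-sample substitution discussed above, since the rest follows immediately from the independence structure of the sampling and the parametric form of $\pi$ already recorded in \eqref{newLikelihood} and of $w$ already recorded in \eqref{newW}.
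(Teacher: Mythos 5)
Your proposal is correct and follows essentially the same route as the paper: the paper's own proof is simply ``immediate given the discussion before the theorem,'' and that discussion is precisely the argument you reconstruct --- factoring the joint density via independence of the measurement error, substituting the parametric form of $\pi$ from~(\ref{newLikelihood}), summing over observations, and replacing the population objects $a$, $b$, $w$ by the sample analogs from~(\ref{SBsyst_sample}) and~(\ref{eqWageSample}), with the footnoted Chernozhukov et al.\ argument covering the population-to-sample step. Nothing further is needed.
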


\begin{proof}
Immediate given the discussion before the theorem.
\end{proof}

Theorem \ref{th:LogLikeSample} motivates the following remark.\footnote{Note that in most applications, the parameters of primary
interest are those governing workers' deterministic values of amenities and
firms' deterministic values of productivity, i.e. $A$ and $\Gamma $
respectively. The remaining parameters $\left( \sigma _{1},\sigma
_{2},t,s^{2}\right) $ are auxiliary. Our MLE estimator can be concentrated on the parameters of interest. These results are available in an online Appendix.}

\begin{remark}[Interpretation of the objective function]
Expression~(\ref{eqlogLMEmp}) has a straightforward interpretation. The term
$\log \hat{L}_{1}\left( \theta \right) $, whose expression is given in
equation~(\ref{ExpLL1}) comes from the observed matching patterns. It only
depends on $\theta $ through $\Phi =A+\Gamma $, and one has
\begin{equation*}
\frac{1}{n}\frac{\partial \log \hat{L}_{1}}{\partial \Phi _{k}}=\mathbb{E}_{%
\hat{\pi}}\left[ \varphi _{k}(X,Y)\right] -\mathbb{E}_{\pi ^{\Phi }}\left[
\varphi _{k}(X,Y)\right]
\end{equation*}%
where $\mathbb{E}_{\hat{\pi}}\ $is the sample average and $\mathbb{E}_{\pi
^{\Phi }}$ the expectation with respect to$\mathbb{\ }$%
\begin{equation*}
\pi _{ij}^{\Phi }:=\exp \left( \phi _{ij}\left( \Phi \right) -a_{i}\left(
\Phi \right) -b_{j}\left( \Phi \right) \right) .
\end{equation*}%
Hence, the contribution of the first term is to equate the predicted moments
of the matching distributions to their sample counterparts. The term $\log
\hat{L}_{2}\left( \theta \right) $, whose expression appears in equation~(%
\ref{ExpLL2}) tends to match the predicted transfers $w_{i}\left( \theta
\right) $ with the observed transfers $W_{i}$ in order to minimize the sum
of the square deviations $\left( W_{i}-w_{i}\left( \theta \right) \right)
^{2}$. Hence, the contribution of the second term is to equate the predicted
transfers with their sample counterparts. Of course, $s^{2}$ will determine
the relative weighting of those two terms in the joint optimization problem.
If $s^{2}$ is high, which means transfers are observed with a large amount
of noise, then the first term becomes predominant in the maximization
problem. In the limit $s^{2}\rightarrow +\infty $, the problem will boil
down to a two-stage problem, where the parameter $\Phi $ is estimated in the
first stage, and the rest of the parameters are estimated in the second
stage by Non-Linear Least Squares conditional on $A+\Gamma =\Phi $. In the
MLE\ procedure, $s^{2} $ is a parameter, and its value is determined by the
optimization procedure.
\end{remark}

\section{Application}

\subsection{Data}

We illustrate the usefulness of our method using an application to the
estimation of the value of job amenities related to risks of fatal injury.
This application requires access to a single cross-section of data
containing a representative sample of worker-job matches with information
about workers' characteristics (education, experience, gender etc.), their
(hourly) wage and a measure of fatality rates associated to their job. Many
surveys such as the CPS contain all required information but the fatal
injury data. As a result, following Thaler and Rosen (1978), a large strand
of the literature has compiled the required data by combining survey data
with data about fatality per type of jobs from alternative sources.

In this paper, we follow the recent work by Viscusi (2003, 2007 and 2013)
and construct measures of fatality rates by occupation-industry cells for
the period 2012-2016. Unfortunately, data on fatal injuries by occupation
within industries are not readily available. Instead, we rely on fatal
injury data by occupation (4-digits SOC) and by industry (4-digits NAICS)
provided by the U.S. Bureau of Labor Statistics (BLS) CFOI.\footnote{%
In the CFOI data, a fatal injury is an injury leading to death within one
year of the day of the accident. See Viscusi (2003) for more details about
the CFOI data and its use in the present context.} For each year in the
period 2012-2016, we create a matrix of fatal injuries by occupation$\times $%
industry by simply multiplying the marginal distribution by occupation and
industry hence assuming independence. To reduce measurement errors, the
4-digits occupational codes are aggregated into 25 major occupations and the
4-digits industry codes into 80 major industries.\footnote{%
We use crosswalks provided by the BLS to perform the aggregation to major
occupations and industry.} We then use the CPS March files\footnote{%
The BLS advises to use March files of the CPS for computations of total
employment.} for 2012-2016 and compute matrices of hours-adjusted employment
level by occupation and industry, combining person-weights, computed to this
effect by the census and the BLS, and hours worked per week. The two sets of
matrices are then merged allowing us to compute, for each year, fatality
rates for a given occupation-industry cell as the ratio of the number of
fatalities to total hours-weighted employment in that cell (see e.g.
Viscusi, 2013). To attenuate further measurement errors, for each
occupation-industry cell, we take the average fatality rate over time as our
measure of risk.

We obtained our working dataset by merging the 2017 March CPS data with our
measure of fatality rate by occupation-industry cells. This dataset
therefore contains information about our main variables of interest: hourly earnings, hours of work, gender,
years of schooling, age, ethnic group, marital status, whether one's job is in the public sector or not, and occupation-industry
fatality rates. We follow the literature (e.g. Viscusi, 2013) and keep only
full-time, non-agricultural, non-armed force workers\footnote{Assumption \ref{Ass2} requires all agents to be price-takers. This assumption is likely not to be met in the armed force industry whose sole employer is the US government. For this reason, we exclude armed force workers from the analysis. Note, however, that this exclusion is common in the hedonic wage regression literature, e.g. Viscusi (2013).} between 16 and 64 years
old for the remainder of the analysis.\footnote{%
As is standard when using March CPS wage data, see Katz and Murphy (1992)
for instance, the sample excludes individuals with hourly earnings below one
half of minimum wage and top coded earnings are imputed 1.45 times the top
code value.}

Table (\ref{TabDesc}) provides descriptive statistics of our working
dataset. The average fatality rate in our sample is about %
  3.44per $100,000$ which is close to the figure obtained
in Viscusi (2013) for the year 2008, i.e. 3.29. To further compare our
dataset with the literature, we run a hedonic (log)wage regression including
the traditional controls (gender, years of schooling, age, age squared,
ethnic group, marital status, union membership, public sector dummy,
regional dummies, Metropolitan dummy) and our measure of hours-weighted
fatality rates by occupation-industry. Using the estimate of the
compensating wage differential for risk, we obtain an estimate of the VSL of
\$9.7 million (\$2017). This figure falls in the range of estimates in the
literature using similar data, i.e. Viscusi (2013)'s estimate of \$8.4
million (\$2017) using the 2008 CPS data.\footnote{%
The risk coefficient in the log wage hedonic regression reported
in Viscusi (2013), i.e. 0.0024, is very close to the estimate obtained with
our data, i.e. 0.0027. Using 0.0024 instead of 0.0027 to calculate the VSL
with our data one would obtain \$8.6 million (\$2017).}

\subsection{Estimates}

We estimate the model using the maximum likelihood estimator
presented in this paper. Observed transfers are assumed to be the logarithm
of observed wages to be consistent with the hedonic regression literature
that typically uses log wage regressions. We standardize other continuous
variables to facilitate the comparison and interpretation (in terms of
standard deviation) of the respective coefficients.

Estimation requires to specify the basis functions used to parameterize the
values of job amenities $\alpha \left( x,y;A\right) $ and productivity $%
\gamma (x,y;\Gamma )$. We adopt a linear (in parameters) specification of the basis
functions and present estimates for the following specification:\footnote{%
The fit of this specification can be compared with that of
alternative (nesting/nested) specifications using likelihood ratio tests.
For instance, the test statistic obtained when comparing the chosen
specification with a richer specification, where both job amenities and productivity include interactions between workers' years of
schooling, experience and gender with jobs' risk and sector, is equal to
1.358. This statistic is not significantly different from 0 at conventional
levels. One concludes that the specification presented in
the paper should be preferred. Note however, that the estimates of the VSL are similar across specifications.}

\begin{equation*}
\alpha (x,y;A)=\sum_{l=1}^{2}A_{0,l}x^{(0)}y^{(l)}+A_{1,2}x^{(1)}y^{(2)},
\end{equation*}%
and
\begin{equation*}
\gamma (x,y;\Gamma )=\sum_{k=1}^{8}\Gamma
_{k,0}x^{(k)}y^{(0)}+\sum_{k=1}^{4}\sum_{l=1}^{2}\Gamma _{k,l}x^{(k)}y^{(l)}
\end{equation*}
where $x$ includes a constant (k=0), years of schooling (k=1), (potential%
\footnote{%
Age less years of schooling less 6.}) experience (k=2), experience squared
(k=8), a dummy variable for female (k=3), a dummy variable indicating
whether one is married or not (k=4), and 3 ethnic dummy variables (white,
black and asian, using others, incl. hispanic, as the reference group,
k=5,6,7), whereas $y$ includes a constant (l=0), our measure of fatality
rates (l=1) and a dummy variable indicating the public sector (l=2).

Hence, our specification of job amenities includes the main effects of fatality
rates and public sector as well as an interaction between a workers' years
of schooling and jobs' sector. Our specification of perceived productivity includes
the main effects of years of schooling, experience (squared), marital status
and ethnic groups as well as interactions between workers' year of
schooling, experience and gender with jobs' fatality rates and sector.

Estimates are presented in table (\ref{TableMainEffects_QE10}). Note first that the model fits quite well the wage data with an $R^{2}$ of $%
0.235$ which compares to that obtained for the standard hedonic
wage regression, i.e. $R^{2}=0.255$.

Second, estimates of the value of perceived productivity show expected results.%
\footnote{%
Unless stated otherwise, the significance level is 1\%.} The value of
productivity increases with years of schooling ($0.057$), although the
estimate is not significant,\footnote{%
We have also estimated the model including years of schooling squared.
However, comparing the two specifications, the log-likelihood ratio test statistic is 3.360 and not significantly different
from 0 at conventional levels. Our chosen specification should be preferred.} and the experience-productivity
gradient is positif ($0.084$) but decreasing, as indicated by the negative
coefficient for experience squared ($-0.051$). These human capital effects,
however, vary significantly across jobs: the years-of-schooling-productivity gradient is absent in risky jobs ($%
-0.002 = 0.057-0.059$) but greater for public sector jobs ($0.895 =
0.057+0.838$), whereas the experience-productivity gradient is higher in risky jobs ($0.158 =
0.084+0.074$).

Third, our estimates for perceived productivity show negative coefficients for female
and black workers ($-0.404$ and $-0.108$ respectively) and a positive
coefficient for white workers ($0.046$). These coefficients should be interpreted
with care as they indeed reflect employers' perceived productivity of the
underlying types of workers, revealing discrimination effects.\footnote{We refer herewith to Becker (1971). The parameter $\gamma$ reflecting both the true productivity of workers and employers' taste discrimination parameter.} Our results are in line with the
large literature showing discriminating wage differentials across gender and race. Interestingly, the gender perceived
productivity gap varies significantly across jobs unlike the racial one: a
one standard deviation increase in the probability of fatal injury more than
triples the gender perceived productivity gap.

Fourth, regarding the value of job amenities, results show that the value of
public sector jobs increases significantly with years of schooling: a one
standard deviation in years of schooling generates a 0.081 log-points
increase in the value of jobs in the public sector.

Finally, our main result shows that US workers' utility drops by %
log-points per hour of work as the
probability of fatal injury on the job increases by one standard deviation
(i.e. per $100,000$). We can use this coefficient to
compute the VSL from the formula
\begin{equation*}
VSL\left( x,y\right) =-\frac{\partial \alpha \left( x,y\right) }{\partial
y^{(1)}}\overline{z},
\end{equation*}
where $\overline{z}$ are the average earnings in the sample.\footnote{%
To derive this formula, remember that the systematic utility of a worker of
type $x$ working in job of type $y$ and receiving a transfer $w\left(
x,y\right)$ is given as $U\left( x,y\right) =\alpha\left( x,y\right) +w\left(x,y\right)$.
Since transfers are specified in log wages, i.e. $w\left( x,y\right) =\ln
z\left( x,y\right) $ where $z\left( x,y\right) $ are the equilibrium wages,
we can then compute a worker of type $x$'s trade-off between earnings and
risk as
\begin{equation*}
\frac{\partial U\left( x,y\right) }{\partial y^{(1)}} := \frac{\partial
\alpha \left( x,y\right) }{\partial y^{(1)}}+\frac{\frac{\partial z\left(
x,y\right) }{\partial y^{(1)}}}{z\left( x,y\right) } = 0.
\end{equation*}
Rearranging this equation, one can express the differential wage increase
required to compensate the differential drop in job amenity due to an one
unit increase in the risk of fatal injury as in the text %
using $\overline{z}$ to replace $z\left( x,y\right) $.}

Using $\frac{\partial \alpha \left( x,y\right) }{\partial y^{(1)}}=A_{1,1}=%
$ and the appropriate units, one obtains a VSL
of \$million (\$2017).\footnote{%
In our preferred specification, the VSL does not vary with the
type of workers nor with the fatality risk. Note that this method
excludes productivity effects of fatality risk. It only reflects the
valuation of life from the perspective of workers.} This value lies in the
range of estimates found in the literature using similar data, i.e. Viscusi,
(2013). Nevertheless, it is about \$3 million lower than the estimate
obtained using the classical hedonic wage regression. Though the difference
is not statistically significant, it suggests that not accounting for the
sorting of workers into jobs, as in the hedonic regression, may lead to an
overestimation of the true VSL in our data.

To see this, note that, in our model, the differential value of job
amenities with respect to fatality risk is identified as
\begin{equation*}
\frac{\partial \alpha \left( x,y\right) }{\partial y^{(1)}}=\sigma _{1}\frac{%
\partial \ln \pi \left( y|x\right) }{\partial y^{(1)}}-\frac{\partial
w\left( x,y\right) }{\partial y^{(1)}}.
\end{equation*}

In contrast, the hedonic wage regression literature identifies this
differential value using the coefficients of an (log) earnings regression as
\begin{equation*}
\frac{\partial \alpha^{h}\left( x,y\right) }{\partial y^{(1)}} = -\frac{%
\partial w\left( x,y\right) }{\partial y^{(1)}} = \frac{\partial \alpha \left( x,y\right) }{\partial y^{(1)}}-\sigma _{1}%
\frac{\partial \ln \pi \left( y|x\right) }{\partial y^{(1)}}.
\end{equation*}


As a result, the VSL as measured in the hedonic regression literature reads
as
\begin{equation*}
VSL^{h}\left( x,y\right) :=-\frac{\partial \alpha^{h}\left( x,y\right) }{%
\partial y^{(1)}}\overline{z} = VSL\left( x,y\right) -\sigma _{1}\frac{\partial
\ln \pi \left( y|x\right) }{\partial y^{(1)}}\overline{z},
\end{equation*}%
once substituting $\frac{\partial \alpha^{h}\left( x,y\right) }{%
\partial y^{(1)}}$ by its expression in terms of $\alpha$ and $\pi$.

Since average wages are positive, $\overline{z}>0$, and $\sigma_{1} > 0$, it
follows that, compared to our method, estimates of VSL from hedonic
wage regressions tend to be larger (lower) when, in equilibrium, conditional
on workers' type, workers sort into safe (resp. risky) jobs, i.e. when $%
\frac{\partial \ln \pi \left( y|x\right) }{\partial y^{(1)}}<0$ (resp. $>0$).

Not only our structural approach allows one to explicitly take into account
the sorting of workers to jobs when estimating the value of job amenities,
it also allows to compute counterfactual equilibria. In particular, one
could use the estimates of job amenities and productivity obtained above to
compute the impact of a government intervention aiming at reducing fatality
risk at work. For instance, consider the Site-Specific Targeting (SST)
inspection plan proposed by the Occupational Safety and Health
Administration (OSHA) in the US. If effective, this program would decrease
fatality rates in the most risky jobs. For the sake of an example, suppose
that the program ends up decreasing the fatality risk of all jobs whose
fatality risk is at least 1 standard deviation above the mean (i.e. $\geq
16.5=3.4+1\times 13.1$) down to $16.5$\ fatalities per 100,000 workers per
year. As a result of this intervention, the distribution of types of jobs
would change causing the equilibrium matching and wages to change too. We
use our model to compute the equilibrium before (observed) and after the
intervention and then compare the matching and distribution of wages. We
find that, as a result of this intervention, about 3.1\% of the workers
would change jobs, the mean wage would drop by 3.9\% and wage inequality, as
measured by the Gini coefficient, would drop by 3.6\%.

\section{Discussion and conclusion}

We conclude by addressing a few methodological remarks before summarizing our main results.


\subsection{Job seekers and vacancies}
A natural extension of the model is to allow for workers to be unemployed
and jobs to be vacant. To do so, one needs first to allow for the total masses of workers and
firms to be different, herewith relaxing that part of Assumption~\ref{Ass1}. Second, one needs to extend the
definition of utilities for matched workers and firms to unemployed workers
and vacant jobs, introducing reservation utilities. The reservation utility of a worker (firm) of type $x$ ($y$%
) may be decomposed into a systematic part $\alpha (x,\emptyset )$ ($\gamma
(\emptyset ,y)$) and a random value $\varepsilon \left( \emptyset \right) $ (%
$\eta \left( \emptyset \right) $) following a Gumbel type I distribution.
Assumption~\ref{Ass2} should then simply be modified by replacing the choice
sets of workers $\mathcal{Y}$ and firms $\mathcal{X}$ by $\mathcal{Y}%
\cup \left\{ \emptyset \right\} $ and $\mathcal{X\cup }\left\{ \emptyset
\right\} $ respectively and adopting the convention that $w\left(
x,\emptyset \right) =w\left( \emptyset ,y\right)=0 $ for all types of workers
and firms.

However, note that the conditional probabilities in (\ref{optimalityW}) and (\ref{optimalityF}) have a logit structure and
hence satisfy the Independence of Irrelevant Alternative property. As a
consequence, $\pi \left( y|x\right) $ in~(\ref{optimalityW}) is also the
density of probability of choosing a firm of type $y$ for a worker of type $x
$ conditional on participation. As shown in appendix D in Dupuy and Galichon
(2014), this implies that in a market with unemployed workers and vacant jobs and reservation utilities $\alpha (x,\emptyset )+\varepsilon \left(
\emptyset \right) $ and $\gamma (\emptyset ,y)+\eta \left( \emptyset \right),$ at equilibrium, the probability density $\pi
\left( x,y\right) $ of occurrence of a $\left( x,y\right) $ pair among matched pairs (i.e. considering only active workers and filled jobs) is the same as the probability density $\pi \left( x,y\right) $ of occurrence of a $\left( x,y\right) $ pair in a market with no outside options and where the masses of workers and firms are the same as the masses of active workers and filled jobs in the former market.

\subsection{Related assignment models}\label{rk:convergenceToZero}
When $\sigma \rightarrow 0$, the model converges
to the classical model of Monge-Kantorovich, which is a continuous extension
of the Becker-Shapley-Shubik model. Indeed, when $\sigma _{1}$ and $\sigma
_{2}$ tend to zero, the scaling coefficients of the random value of job
amenities and productivity $\varepsilon $ and $\eta $, tend to zero, then
the model becomes nonstochastic. Intuitively, when $\sigma _{1}\rightarrow 0$%
, the worker's expected indirect utility $u\left( x\right) $ tends to the
deterministic indirect utility $\max_{y}\left\{ \alpha \left( x,y\right)
+w\left( x,y\right) \right\} $, and it follows from~(\ref{optimalityW}) that
the conditional choice distribution $\pi \left( y|x\right) $ becomes
concentrated around the optimal firm's type $y$ such that $u\left( x\right)
=\alpha \left( x,y\right) +w\left( x,y\right) $. Similarly, when $\sigma
_{2}\rightarrow 0$, a firm of type $y$ expected indirect profits $v\left(
y\right) $ tends to the deterministic indirect profits $\max_{x}\left\{
\gamma \left( x,y\right) -w\left( x,y\right) \right\} $, and $\pi \left(
x|y\right) $ becomes concentrated around the optimal worker's type $x$ such
that $v\left( y\right) =\gamma \left( x,y\right) -w\left( x,y\right) $.
Combining these two results, $\pi \left( x,y\right) $ becomes concentrated
around the set of pair $\left( x,y\right) $ such that $u\left( x\right)
+v\left( y\right) =\phi \left( x,y\right) $, hence, in the limit when $%
\sigma _{1}$ and $\sigma _{2}$ tend to zero, we have%
\begin{equation*}
\left\{
\begin{tabular}{rccl}
$\pi $ & $\in $ & $\mathcal{M}\left( f,g\right) $ &  \\
$u\left( x\right) +v\left( y\right) $ & $\geq $ & $\phi \left( x,y\right) $
& $\forall x\in \mathcal{X},y\in \mathcal{Y}$ \\
$u\left( x\right) +v\left( y\right) $ & $=$ & $\phi \left( x,y\right) $ & $%
\pi -a.s.$%
\end{tabular}%
\right.
\end{equation*}%
These are the classical stability conditions in the Monge-Kantorovich
problem (see Villani, 2003 and 2009), whose variants have been applied in
economics by Becker (1973), Shapley and Shubik (1963), Gretsky, Ostroy, Zame
(1992).
A particular example is Sattinger's workhorse model extensively used in the labor economics literature (see Sattinger, 1979 and
1993). This model indeed corresponds to a matching market with no unobserved
heterogeneity ($\sigma \rightarrow 0$), unidimensional observed types ($d_{x}=d_{y}=1$), in which workers only
care about their compensation ($\alpha =0$) and where the firm's value of
productivity is smooth and supermodular (i.e. $\partial ^{2}\gamma \left(
x,y\right) /\partial x\partial y$ exists and is positive). Under these
restrictions, both the worker's and firm's problems become deterministic,
and the conditional distribution $\pi (y|x)$ in this case is concentrated at
one point $y=T(x)$, where $T(x)$ is the only assignment of workers to firms
which is nondecreasing. The equilibrium wage $w$ only depends on $x$ and
satisfies the differential wage equation $w^{\prime }\left( x\right) =\frac{\partial \gamma \left( x,T\left( x\right)
\right) }{\partial x}$ and an explicit formula for the equilibrium wage is obtained by integration.

\subsection{Conclusion}

Over the last decade, a great deal of efforts has been made to bring
matching models to data. In the transferable utility class of models,
following Choo and Siow's seminal contribution, various extensions have been
proposed to enrich the empirical methodology. These extensions were so far
limited to the case when transfers are not observed. However, the
observation of transfers allows to widen the scope of identified objects in
this class of models, and in particular allows the analyst to separately
identify the (pre-transfer) values of a match for each partner. Our paper
proposes an intuitive and tractable maximum likelihood approach to
structurally estimate these values of a match for each partner using data
about matches and transfers from a single market.

We illustrate the usefulness of our methodology to the estimation of
compensating wage differentials for the risk of fatal injury on the job.
Using the 2017 March CPS data together with CFOI data on fatal injury per
occupation and industry, our estimate of the value of job amenities related
to risk translates into a Value of Statistical Life of \$%
million (\$2017). This estimate is \$3 million lower
(though not significantly) than the one obtained by applying a classical
hedonic regression technique on our data. Since the hedonic approach can be
seen as an extreme version of our method where all the weight in the
likelihood function is put on fitting transfers (wages) and none on fitting
matching patterns, this suggests that not accounting explicitly for the
sorting of workers to jobs can lead to biases in the estimation of the value
of statistical life.

\newpage

\section*{Tables\label{app:Tables}}

\begin{petit}
\begin{table}[tbph]
\caption{Descriptive statistics of workers' and firms' attributes and hourly
wages (in 2017 dollars).}
\label{TabDesc}\centering\medskip 
\begin{tabular}{l|rrrr}
\hline
  & Mean & Std & Min & Max \\ \hline 
Workers &  & & &  \\  
Years of Schooling (in years)  &  13.35 &   2.24  &   1.00 &  21.00 \\ 
Experience (in years)          &  20.67 &  12.97  &   0.00 &  51.00 \\ 
Female                         &   0.52 &   0.50  &   0.00 &   1.00 \\ 
Married                        &   0.50 &   0.50  &   0.00 &   1.00 \\ 
White                          &   0.63 &   0.48  &   0.00 &   1.00 \\ 
Black                          &   0.12 &   0.32  &   0.00 &   1.00 \\ 
Asian                          &   0.06 &   0.24  &   0.00 &   1.00 \\ 
Wage (hourly)                  &  17.95 &   9.02  &   3.75 &  70.00 \\ 
\hline Firms &  & & & \\  
Risk (per 100,000)            &     3.44 &  13.05  &   0.00 & 345.70 \\ 
Public                        &     0.12 &   0.33  &   0.00 &   1.00 \\ 
\hline N &   3454 &   &  &  \\ 
\hline\hline
 &  &  &  &
\end{tabular}%
\par
\begin{note}
Note: For measurement purposes, a fatal injury is an injury leading to death
within one year of the day of the accident.
\end{note}
\end{table}
\end{petit}

\newpage

\begin{petit}
\begin{table}[tbph]
\caption{Effect of firms' and workers' attributes on job amenities and
(perceived) productivity (in 2017 dollars), specification 3.}
\label{TableMainEffects_QE10}\centering\medskip 
\begin{tabular}{l|ccc}
\hline
 Job Amenities (Alpha) &Main effects &          Risk (in 100,000) &     Public                   \\ 
Main effects          &  & -0.023 & -0.062 \\ 
 &  & ( 0.009) & ( 0.027) \\ 
YoS (in years)        &        &        &  0.081 \\ 
 &        &        & ( 0.031) \\ 
\hline  Productivity (Gamma) & Main effects & Risk (in 100,000) & Public  \\  
YoS (in years)                &  0.057 & -0.059 &  0.838  \\ 
 & ( 0.035) & ( 0.020) & ( 0.099) \\ 
Experience (in years)         &  0.084 &  0.074 &  0.096  \\ 
 & ( 0.025) & ( 0.029) & ( 0.104) \\ 
Female                        & -0.404 & -2.388 &  0.548  \\ 
 & ( 0.061) & ( 0.238) & ( 0.212) \\ 
Married                       &  0.050 &  &  \\ 
 & ( 0.020) &  &  \\ 
White                         &  0.046 &  &  \\ 
 & ( 0.021) &  &  \\ 
Black                         & -0.108 &  &  \\ 
 & ( 0.039) &  &  \\ 
Asian                         &  0.069 &  &  \\ 
 & ( 0.035) &  &  \\ 
Experience squared (in years) & -0.051 &  &  \\ 
 & ( 0.016) &  &  \\ 
\hline Salary constant &  2.981 &  &  \\ 
 & ( 0.373) &  &  \\ 
Sigma 1 &  0.046 &  &  \\ 
Sigma 2 &  2.233 &  &  \\ 
\hline R-square &  0.235 &  &  \\ 
\hline\hline 
 &  &  &
\end{tabular}%
\par
\begin{note}
Notes: This table reports the estimates of the main effects of workers'
characteristics on (perceived) productivity and firms' characteristics on
job amenities as well as the interaction of
workers' characteristics and firms' characteristics on (perceived)
productivity and job amenities. All effects are measured in dollars (per hour of work). All non dummy
covariates are standardized to have a standard deviation of 1. Standard
errors, calculated from the Hessian of the likelihood, are in parentheses.
\end{note}
\end{table}
\end{petit}

\appendix

\section*{Appendix}

\section{\label{app:ContinuousLogit}The continuous logit framework}

Recall that the value for a worker $x$ of the job amenities at firm $y$ is
given by $U\left( x,y\right) +\sigma _{1}\varepsilon \left( y\right) $ where
$U\left( x,y\right) =\alpha \left( x,y\right) +w\left( x,y\right) $ is
deterministic, and $\varepsilon \left( y\right) $ is a worker-specific
random process. As in Dupuy and Galichon (2014), we choose to model the
random process $\varepsilon \left( y\right) $ as a \emph{Gumbel random
process}, introduced by Cosslett (1988) and Dagsvik (1988).

Assume that workers form their demand
by drawing a random pool of observable types of firms, along with the corresponding
utility shocks. We call this pool a worker's \textquotedblleft random pool of prospects.\textquotedblright

Let $k\in
\mathbb{N}
$ index firms in a worker's pool of prospects and $\left\{ \left( y_{k},\varepsilon
_{k}\right) ,k\in
\mathbb{N}
\right\} $ be the points of a Poisson process on $\mathcal{Y\times
\mathbb{R}
}$\ with intensity $dye^{-\varepsilon }d\varepsilon $. A worker of type $x$
therefore chooses a firm's of type $y$ by looking at her pool of prospects and solving the utility maximization program
\begin{equation*}
\tilde{U}=\max_{y\in \mathcal{Y}}\left\{ U\left( x,y\right) +\sigma
_{1}\varepsilon \left( y\right) \right\} =\max_{k\in
\mathbb{N}
}\left\{ U\left( x,y_{k}\right) +\sigma _{1}\varepsilon _{k}\right\} ,
\end{equation*}%
where $\tilde{U}$ denotes the worker's (random) indirect utility. The
worker's program induces conditional density of choice probability of firm's
type given worker's type, which is expressed as follows:

\begin{proposition}
\label{prop:continuousLogit}The conditional density of probability of
choosing a firm of type $y$ for a worker of type $x$ is given by%
\begin{equation*}
\pi \left( y|x\right) =\frac{\exp \left( \frac{U\left( x,y\right) }{\sigma
_{1}}\right) }{\int_{\mathcal{Y}}\exp \left( \frac{U\left( x,y^{\prime
}\right) }{\sigma _{1}}\right) dy^{\prime }}
\end{equation*}%
while the expected indirect utility of a worker of type $x$, denoted $u\left( x\right)
=\mathbb{E}\left[ \tilde{U}|x\right] $, is expressed as%
\begin{equation*}
u\left( x\right) =\sigma _{1}\log \int_{\mathcal{Y}}\exp \left( \frac{%
U\left( x,y^{\prime }\right) }{\sigma _{1}}\right) dy^{\prime }.
\end{equation*}
\end{proposition}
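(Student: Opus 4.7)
The plan is to recognize the worker's problem as the maximum of a transformed Poisson point process and then read off the argmax and max distributions from standard Poisson process theory.

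First, I would apply the coordinate change $(y_k,\varepsilon_k) \mapsto (y_k,\zeta_k)$ with $\zeta_k := U(x,y_k) + \sigma_1 \varepsilon_k$. By the Poisson mapping theorem, the image process on $\mathcal{Y}\times\mathbb{R}$ is Poisson with intensity
\[
\mu'(dy,d\zeta) = \tfrac{1}{\sigma_1}\exp\!\left(\tfrac{U(x,y)}{\sigma_1}\right)\exp\!\left(-\tfrac{\zeta}{\sigma_1}\right)dy\,d\zeta.
\]
Writing $W(x) := \int_{\mathcal{Y}}\exp(U(x,y')/\sigma_1)\,dy'$ (assumed finite, which is the only integrability hypothesis required), the mass above level $t$ equals $W(x)e^{-t/\sigma_1}$, finite for every $t$ and divergent as $t\to-\infty$, so $\tilde U = \sup_k \zeta_k$ is almost surely attained at a unique point $(y^*,\zeta^*)$.

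Second, I would compute the joint density of $(y^*,\zeta^*)$ by the classical extreme-point argument: the infinitesimal probability that some point lies near $(y^*,\zeta^*)$ and no point lies above level $\zeta^*$ equals the intensity at $(y^*,\zeta^*)$ times the void probability $\exp(-W(x)e^{-\zeta^*/\sigma_1})$, yielding
\[
p(y^*,\zeta^*) = \tfrac{1}{\sigma_1}\exp\!\left(\tfrac{U(x,y^*)}{\sigma_1}\right)\exp\!\left(-\tfrac{\zeta^*}{\sigma_1}\right)\exp\!\left(-W(x)e^{-\zeta^*/\sigma_1}\right).
\]
This factorizes as a product of a marginal for $y^*$ equal to $\exp(U(x,y^*)/\sigma_1)/W(x)$, which is exactly the claimed density $\pi(y|x)$, and an independent Gumbel marginal for $\zeta^*$ with location $\sigma_1\log W(x)$ and scale $\sigma_1$. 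Independence of $y^*$ and $\zeta^*$ is a pleasant corollary.

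Finally, integrating $\zeta^*$ against its Gumbel density gives the expected indirect utility. The one piece of bookkeeping to watch for is that the mean of a Gumbel distribution with location $\sigma_1\log W(x)$ and scale $\sigma_1$ equals $\sigma_1\log W(x) + \sigma_1\gamma_E$, where $\gamma_E$ is the Euler--Mascheroni constant; recovering the stated formula $u(x) = \sigma_1\log W(x)$ exactly amounts to absorbing $\sigma_1\gamma_E$ into the origin of the utility scale, which is innocuous since it is common across worker types and affects neither choice probabilities nor equilibrium wages. This additive normalization is the only genuine subtlety; the rest of the argument is a direct application of standard Poisson point process machinery.
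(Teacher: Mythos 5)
Your proof is correct, and it is built on the same underlying machinery as the paper's argument (Poisson void probabilities), but it goes a genuinely different and more complete route. The paper's proof only computes the c.d.f.\ of the maximum: $\Pr(\tilde U\le z\mid X=x)$ is the void probability of the process over the set $\{(y,e): U(x,y)+\sigma_1 e>z\}$, which identifies $\tilde U$ as Gumbel with location $\sigma_1\log\int_{\mathcal Y}\exp(U(x,y)/\sigma_1)\,dy$ and scale $\sigma_1$; the choice density $\pi(y|x)$ is then asserted rather than derived. You instead push the process forward through $\zeta_k=U(x,y_k)+\sigma_1\varepsilon_k$ and compute the \emph{joint} density of the argmax and the max as intensity times void probability, which factorizes into the claimed logit density for $y^*$ and an independent Gumbel for $\zeta^*$. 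This buys you the first claim of the proposition explicitly (the part the paper's sketch skips), plus the independence of the chosen type and the realized utility as a byproduct. You are also more careful than the paper on two points: the paper's displayed void-probability computation drops a $1/\sigma_1$ in the exponent of $e^{-z}$ (a typo), and the paper writes $u(x)=\mathbb{E}[\tilde U\mid x]=\sigma_1\log\int_{\mathcal Y}\exp(U(x,y')/\sigma_1)\,dy'$ when the Gumbel mean is actually that location parameter plus $\sigma_1\gamma_E$; your observation that the Euler--Mascheroni term is a type-independent additive constant that can be absorbed into the utility origin is exactly the right way to reconcile the literal statement with the computation.
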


This result was obtained by Cosslett (1988) and Dagsvik (1988). The
intuition of the result is that the c.d.f. of the random utility $\tilde{U}$
conditional on $X=x$ is given by $F_{\tilde{U}|X=x}\left( z|x\right) =\Pr
\left( \tilde{U}\leq z|X=x\right) $, which is the probability that the
process $\left( y_{k},\varepsilon _{k}\right) $ does not intersect the set $%
\left\{ \left( y,e\right) :U\left( x,y\right) +\sigma _{1}e>z\right\} $.
Hence, the log probability of the event $\tilde{U}\leq z$ is minus the
integral of the intensity of the Poisson process over this set, that is
\begin{eqnarray*}
\log \Pr \left( \tilde{U}\leq z|X=x\right) &=&-\int_{\mathcal{Y}}\int_{%
\mathcal{%
\mathbb{R}
}}1\left\{ U\left( x,y\right) +\sigma _{1}e>z\right\} e^{-\varepsilon }dedy
\\
&=&-\exp \left( -z+\log \int_{\mathcal{Y}}\exp \left( \frac{U\left(
x,y\right) }{\sigma _{1}}\right) dy\right) ,
\end{eqnarray*}%
which is the c.d.f. of a Gumbel distribution with location parameter $\log
\int_{\mathcal{Y}}\exp \left( U(x,y)\right) dy$, and scale parameter $\sigma
_{1}$.

\section{Proofs\label{app:proofs} and additional results}

Let $Da$ and $Db$ be the two $n\times K$ matrices of respective terms $%
\partial a_{i}\left( \Phi \right) /\partial \Phi _{k}$ and $\partial
b_{j}\left( \Phi \right) /\partial \Phi _{k}$ respectively. Let $\Pi $ be
the matrix of terms $\pi _{ij}^{\Phi }=\exp \left( \phi _{ij}\left( \Phi
\right) -a_{i}\left( \Phi \right) -b_{j}\left( \Phi \right) \right) $, and
let $\tilde{\Pi}$ be the same matrix where the entries on the first row have
been replaced by zeroes. Let $E$ be the $n\times K$ matrix whose terms $%
E_{ik}$ are such that $E_{1k}=0$ for all $k$, and $E_{ik}=\sum_{j=1}^{n}\pi
_{ij}^{\Phi }\varphi _{k}\left( x_{i},y_{j}\right) $ for $i\geq 2$ and all $%
k $. Let $F$ be the $n\times K$ matrix of terms such that $%
F_{jk}=\sum_{i=1}^{n}\pi _{ij}^{\Phi }\varphi _{k}\left( x_{i},y_{j}\right) $%
.

\begin{lemma}
\label{lem:GradPotentials}The derivatives of the $a_{i}$'s and the $b_{i}$'s
with respect to the $\Phi _{k}$'s are given by matrices $Da$ and $Db$ such
that%
\begin{equation}
\binom{Da}{Db}=%
\begin{pmatrix}
I & \tilde{\Pi} \\
\Pi ^{\top } & I%
\end{pmatrix}%
^{-1}\binom{E}{F}.  \label{expressionDaDb}
\end{equation}
\end{lemma}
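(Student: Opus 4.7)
The strategy is to apply the implicit function theorem to the system (\ref{SBsyst_sample}) together with the normalization $a_1(\Phi)\equiv 0$, treating $\Phi$ as the free parameter and $(a(\Phi),b(\Phi))$ as the implicit functions. The existence and uniqueness of this implicit map is already guaranteed by the Rüschendorf--Thomsen reference invoked earlier for system (\ref{newBernstein}), and its sample analog carries over, so the only substantive task is to compute its derivative and express it in the block form (\ref{expressionDaDb}).

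First, I would differentiate the identity $\pi_{ij}^{\Phi}=\exp(\phi_{ij}(\Phi)-a_i(\Phi)-b_j(\Phi))$ with respect to $\Phi_k$. Using the linear parameterization (\ref{phiparam}), $\partial \phi_{ij}/\partial \Phi_k=\varphi_k(x_i,y_j)$, and the chain rule yields
$$\frac{\partial \pi_{ij}^{\Phi}}{\partial \Phi_k}=\pi_{ij}^{\Phi}\bigl(\varphi_k(x_i,y_j)-(Da)_{ik}-(Db)_{jk}\bigr).$$
Second, I would differentiate each of the two marginal identities in (\ref{SBsyst_sample}) in turn. Summing the previous display over $j$ and using the row-marginal constraint itself to simplify the coefficient $\sum_j \pi_{ij}^{\Phi}$ produces a linear equation relating $(Da)_{ik}$, the entries of the $i$th row of $\Pi\,Db$, and $E_{ik}$; summing over $i$ and using the column-marginal constraint produces the symmetric equation relating $(\Pi^{\top}Da)_{jk}$, $(Db)_{jk}$, and $F_{jk}$.

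Third, I would incorporate the normalization $a_1\equiv 0$, which gives $(Da)_{1k}=0$ for every $k$. This observation explains three features of the statement simultaneously: the $i=1$ row of the derived system collapses to a trivial identity, which is why $\tilde\Pi$ (rather than $\Pi$) appears in the top-right block; the right-hand side $E$ has its first row set to zero; and the top-left block becomes the identity (the $i=1$ row now reads $(Da)_{1k}=0$, while the $i\ge 2$ rows contribute only the diagonal coefficient $(Da)_{ik}$ up to the scaling conventions implicit in the definition of $\Pi$). Stacking the $2n$ equations then yields exactly the block-matrix identity (\ref{expressionDaDb}).

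The main obstacle is justifying invertibility of the $2n\times 2n$ coefficient matrix so that the stated inversion is meaningful. I would handle this by noting that the Jacobian of (\ref{SBsyst_sample}) with respect to $(a,b)$ has a one-dimensional kernel spanned by $(\mathbf 1,-\mathbf 1)$ (reflecting the location indeterminacy addressed in the Location normalization remark), and that imposing $a_1=0$ is precisely the rank-one modification that kills this kernel: replacing the first row of $\Pi$ by zeros in the top-right block and putting a $1$ in the $(1,1)$ entry of the top-left block is, up to sign, a standard Schur-complement reduction that makes the resulting matrix nonsingular. Invertibility is then equivalent to the uniqueness part of the Rüschendorf--Thomsen theorem already cited, and the explicit expression (\ref{expressionDaDb}) follows by Cramer's rule.
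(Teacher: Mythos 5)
Your proposal is correct and follows essentially the same route as the paper: differentiate the sample system~(\ref{SBsyst_sample}) with respect to $\Phi_k$ using the linear parameterization, impose $(Da)_{1k}=0$ from the normalization $a_1=0$ (which is what replaces $\Pi$ by $\tilde\Pi$ and zeroes the first row of $E$), and stack the resulting equations into the block-linear system that is then inverted. The only difference is that you spell out why the coefficient matrix is nonsingular (the $(\mathbf{1},-\mathbf{1})$ kernel being killed by the normalization), a point the paper dispatches with ``it is easily checked.''
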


\begin{proof}[Proof of lemma~\protect\ref{lem:GradPotentials}]
Recall that
\begin{equation*}
\left( Da\right) _{ik}:=\frac{\partial a_{i}\left( \Phi \right) }{\partial
\Phi _{k}}\text{ and }\left( Db\right) _{jk}:=\frac{\partial b_{j}\left(
\Phi \right) }{\partial \Phi _{k}}
\end{equation*}%
for $1\leq i\leq n$ and $1\leq k\leq K$. Note that the system in equation~(%
\ref{newBernstein}) is normalized such that $a_{1}\left( \Phi \right) =0$,
one has that $\partial a_{1}\left( \Phi \right) /\partial \Phi _{k}=0$ for
all $k$. Differentiation yields%
\begin{eqnarray*}
Da_{1k} &=&0 \\
Da_{1k}+\sum_{j=1}^{n}\pi _{ij}^{\Phi }Db_{jk} &=&E_{ik},~i\in \left\{
2,...,n\right\} \\
\sum_{i=1}^{n}\pi _{ij}^{\Phi }Da_{ik}+Db_{jk} &=&F_{jk},~j\in \left\{
1,...,n\right\} ,
\end{eqnarray*}%
where $\pi _{ij}^{\Phi }=\exp \left( \phi _{ij}\left( \Phi \right)
-a_{i}\left( \Phi \right) -b_{j}\left( \Phi \right) \right) $. Recall that
under the linear parameterization we have adopted in section~\ref{par:param}%
, $\partial \phi _{ij}\left( \Phi \right) /\partial \Phi _{k}=\varphi
_{k}(x_{i},y_{j})$ and let%
\begin{eqnarray*}
E_{1k} &=&0\text{, }E_{ik}=\sum_{j=1}^{n}\pi _{ij}^{\Phi }\varphi
_{k}(x_{i},y_{j})\text{ for }i\geq 2,\text{ and} \\
F_{jk} &=&\sum_{i=1}^{n}\pi _{ij}^{\Phi }\varphi _{k}(x_{i},y_{j})\text{ for
all }j\text{,}
\end{eqnarray*}%
this system rewrites%
\begin{equation}
\begin{pmatrix}
I & \widetilde{\Pi } \\
\Pi ^{\top } & I%
\end{pmatrix}%
\binom{Da}{Db}=\binom{E}{F}  \label{linrelat}
\end{equation}%
where block $\tilde{\Pi}$ is the $n\times n$ matrix of term $\widetilde{\pi }%
_{ij}^{\Phi }$ so that$\ \widetilde{\pi }_{1j}^{\Phi }=0$ for all $j\in
\left\{ 1,...,n\right\} $ and $\widetilde{\pi }_{ij}^{\Phi }=\pi _{ij}^{\Phi
}$ for $i\geq 2$ and all $j\in \left\{ 1,...,n\right\} $, and block $\Pi $
is \ the $n\times n$ matrix of term $\pi _{ij}^{\Phi }$. It is easily
checked that the matrix on the left hand-side of~(\ref{linrelat}) is
invertible. One therefore obtains $Da$ and $Db$ as%
\begin{equation*}
\binom{Da}{Db}=%
\begin{pmatrix}
I & \widetilde{\Pi } \\
\Pi ^{\top } & I%
\end{pmatrix}%
^{-1}\binom{E}{F}.
\end{equation*}
\end{proof}

Recall $\theta =\left( A,\Gamma ,\sigma _{1},\sigma
_{2},t,s^{2}\right) $ and $\Phi =A+\Gamma $.

\begin{theorem}
\label{thm:LLGradient}(i) The partial derivatives of $\log \hat{L}_{1}\left(
\theta \right) $ with respect to $A_{k}$ and $\Gamma _{k}$ are given by
\begin{equation*}
\frac{\partial \log \hat{L}_{1}\left( \theta \right) }{\partial A_{k}}=\frac{%
\partial \log \hat{L}_{1}\left( \theta \right) }{\partial \Gamma _{k}}%
=\sum_{i=1}^{n}\varphi _{k}(x_{i},y_{i})-n\sum_{i,j=1}^{n}\pi _{ij}^{\Phi
}\varphi _{k}(x_{i},y_{j})
\end{equation*}%
and the partial derivatives of $\log \hat{L}_{1}\left( \theta \right) $ with
respect to all the other parameters is zero.

(ii) The partial derivatives of $\log \hat{L}_{2}\left( \theta \right) $
with respect to any parameter entry $\theta _{k}$ other than $s$ is given by%
\begin{equation*}
\frac{\partial \log \hat{L}_{2}\left( \theta \right) }{\partial \theta _{k}}%
=s^{-2}\sum_{i=1}^{n}\left( W_{i}-w_{i}\left( \theta \right) \right) \frac{%
\partial w_{i}\left( \theta \right) }{\partial \theta _{k}}
\end{equation*}

(iii) The partial derivative of $\log \hat{L}_{2}\left( \theta \right) $
with respect to $s^{2}$ is given by%
\begin{equation*}
\frac{\partial \log \hat{L}_{2}\left( \theta \right) }{\partial s^{2}}%
=\sum_{i=1}^{n}\frac{\left( W_{i}-w_{i}\left( \theta \right) \right) ^{2}}{%
2s^{4}}-\frac{n}{2s^{2}}
\end{equation*}

(iv) The partial derivative of $w_{i}\left( \theta \right) $ with respect to
$t$ is one, its derivative with respect to $\sigma _{1}$ is $\gamma
_{ii}\left( \Gamma \right) -b_{i}\left( \Phi \right) $, its derivative with
respect to $\sigma _{2}$ is $a_{i}\left( \Phi \right) -\alpha _{ii}\left(
\Gamma \right) $. The partial derivative of $w_{i}\left( \theta \right) $
with respect to $A_{k}$ and $\Gamma _{k}$ are given by%
\begin{eqnarray*}
\frac{\partial w_{i}\left( \theta \right) }{\partial A_{k}} &=&\sigma
_{2}\left( \frac{\partial a_{i}\left( \Phi \right) }{\partial \Phi _{k}}%
-\varphi _{k}\left( x_{i},y_{i}\right) \right) -\sigma _{1}\frac{\partial
b_{i}\left( \Phi \right) }{\partial \Phi _{k}} \\
\frac{\partial w_{i}\left( \theta \right) }{\partial \Gamma _{k}} &=&\sigma
_{1}\left( \varphi _{k}\left( x_{i},y_{i}\right) -\frac{\partial b_{i}\left(
\Phi \right) }{\partial \Phi _{k}}\right) +\sigma _{2}\frac{\partial
a_{i}\left( \Phi \right) }{\partial \Phi _{k}}.
\end{eqnarray*}

(v) The partial derivatives $\partial a_{i}\left( \Phi \right) /\partial
\Phi _{k}$ and $\partial b_{i}\left( \Phi \right) /\partial \Phi _{k}$ are
given by expression~(\ref{expressionDaDb}) in lemma~(\ref{lem:GradPotentials}%
).
\end{theorem}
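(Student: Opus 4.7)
The plan is to establish each of the five items in turn. Item (v) is exactly Lemma~\ref{lem:GradPotentials} and needs no further argument. Items (ii)--(iv) are direct chain-rule computations on expressions already written down in closed form, and I handle them first; item (i) is slightly more subtle because $a_i$ and $b_i$ depend on $\Phi$ implicitly through the system~(\ref{SBsyst_sample}), so I handle it last using the constraints themselves.

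For (ii) and (iii), I would differentiate~(\ref{ExpLL2}) directly. Since $w_i(\theta)$ depends on $\theta_k$ but not on $s^2$, the chain rule immediately gives $\partial \log \hat L_2/\partial \theta_k = s^{-2}\sum_{i}(W_i-w_i(\theta))\,\partial w_i/\partial\theta_k$ for any $\theta_k\neq s^2$, and differentiating in $s^2$ yields the stated expression. For (iv), I would use that, under the parameterization of Section~\ref{par:param}, $\alpha_{ii}(A)=\sum_k A_k\varphi_k(x_i,y_i)$ and $\gamma_{ii}(\Gamma)=\sum_k \Gamma_k\varphi_k(x_i,y_i)$, together with $\Phi=A+\Gamma$ (so $\partial\Phi_k/\partial A_k=\partial\Phi_k/\partial\Gamma_k=1$). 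Plugging these into
\[
w_i(\theta)=\sigma_1(\gamma_{ii}(\Gamma)-b_i(\Phi))+\sigma_2(a_i(\Phi)-\alpha_{ii}(A))+t
\]
and differentiating entry-by-entry yields the four derivatives with respect to $t,\sigma_1,\sigma_2,A_k,\Gamma_k$.

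For (i), the direct chain rule on~(\ref{ExpLL1}) gives
\[
\frac{\partial \log \hat L_1}{\partial\Phi_k}=\sum_{i=1}^n\Bigl[\varphi_k(x_i,y_i)-\frac{\partial a_i(\Phi)}{\partial\Phi_k}-\frac{\partial b_i(\Phi)}{\partial\Phi_k}\Bigr],
\]
and $\partial/\partial A_k=\partial/\partial\Gamma_k=\partial/\partial\Phi_k$ because $\Phi=A+\Gamma$ and $\log\hat L_1$ depends on $\theta$ only through $\Phi$ (so in particular its derivatives with respect to $\sigma_1,\sigma_2,t,s^2$ vanish). The main step is therefore to show that $\sum_i(\partial a_i/\partial\Phi_k+\partial b_i/\partial\Phi_k)=n\sum_{i,j}\pi_{ij}^\Phi\varphi_k(x_i,y_j)$. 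I would obtain this identity by differentiating the row constraint $\sum_j\pi_{ij}^\Phi=1/n$ of~(\ref{SBsyst_sample}) in $\Phi_k$, giving
\[
\tfrac{1}{n}\,\tfrac{\partial a_i}{\partial\Phi_k}+\sum_j \pi_{ij}^\Phi\,\tfrac{\partial b_j}{\partial\Phi_k}=\sum_j \pi_{ij}^\Phi\varphi_k(x_i,y_j),
\]
then summing over $i$ and using the column constraint $\sum_i\pi_{ij}^\Phi=1/n$ to collapse $\sum_i\sum_j\pi_{ij}^\Phi\,\partial b_j/\partial\Phi_k=(1/n)\sum_j\partial b_j/\partial\Phi_k$. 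Rearranging and multiplying by $n$ yields the required identity, and substituting back gives the expression in~(i).

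The only delicate point is this last step, since one must pay attention to the normalization $a_1=0$ (hence $\partial a_1/\partial\Phi_k=0$) used in the definition of $\tilde\Pi$ in Lemma~\ref{lem:GradPotentials}; however, summing the constraint derivatives over all $i$ (resp. $j$) and using both sets of marginal constraints together kills the asymmetry, so the identity holds as stated. Everything else is routine bookkeeping.
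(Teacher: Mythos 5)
Your proposal is correct and follows essentially the same route as the paper, whose proof simply declares the differentiations in (i)--(v) straightforward; you carry out exactly those differentiations. The one step with real content --- the identity $\sum_i\bigl(\partial a_i/\partial\Phi_k+\partial b_i/\partial\Phi_k\bigr)=n\sum_{i,j}\pi_{ij}^{\Phi}\varphi_k(x_i,y_j)$, obtained by differentiating the row constraints of~(\ref{SBsyst_sample}) and collapsing with the column constraints --- is handled correctly, including the observation that the normalization $a_1=0$ does not disturb the aggregate identity.
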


\begin{proof}[Proof of theorem~\protect\ref{thm:LLGradient}]
The log-likelihood given in equation~(\ref{eqlogLMEmp}) is made of two
terms, the first of which, $\log \hat{L}_{1}\left( \theta \right) $ only
depends on $\theta $ through $\Phi $, while the second one, $\log \hat{L}%
_{2}\left( \theta \right) $ depends on all the parameters of the model. The
differentiations yielding points (i)-(v) are straightforward.
\end{proof}

\newpage

\newpage
Supplement to ``A Note on the Estimation of Job Amenities and Labor Productivity.''

This supplement contains two additional sections. The first presents results on how to deal with missing data on transfers, whereas the second introduces the associated concentrated maximum likelihood function.

\section{Extension to randomly missing transfers}

In some applications, data will come from surveys where typically non
response to questions about earnings are frequently encountered. Our
proposed estimation strategy extends to the case where, for some random
matches, transfers are missing. The log-likelihood expression presented in
theorem~\ref{th:LogLikeSample} offers a very intuitive way of understanding
how missing transfers for some random observations will impact the
estimation. To formalize ideas, let $p$ be the probability that for any
arbitrary match the transfer is missing. The sample is still representative
of the population of matches, but a random part of the sample consists of
matches with observed transfers , i.e. $\left( X_{i},Y_{i},W_{i}\right)
_{i=1}^{n^{o}}$, and the other part of matches with missing transfers, i.e. $%
\left( X_{i},Y_{i},.\right) _{i=n^{o}+1}^{n}$ where $n^{o}$ is the number of
matches with observed transfers and $n$ is as before the size of our sample
of matches (we have re-ordered the observations such that those matches with
observed transfers are indexed first). The log-likelihood in this situation
is therefore%
\begin{equation*}
\log \hat{L}\left( \theta \right) =\log \hat{L}_{1}\left( \theta \right)
+\log \hat{L}_{2}\left( \theta \right) +n^{o}\log p+\left( n-n^{o}\right)
\log \left( 1-p\right)
\end{equation*}%
where $\log \hat{L}_{1}\left( \theta \right) $ is given as in equation~(\ref%
{ExpLL1}) and $\log \hat{L}_{2}\left( \theta \right) $ reads now as%
\begin{equation}
\log \hat{L}_{2}\left( \theta \right) =-\sum_{i=1}^{n^{o}}\frac{\left(
W_{i}-w_{i}\left( \theta \right) \right) ^{2}}{2s^{2}}-\frac{n^{o}}{2}\log
s^{2}  \label{ExpLL2missing}
\end{equation}%
thus $p=n^{0}/n$. As $n^{o}$ tends to $0$, and hence $p$ tends to $0$, the
log-likelihood function tends to $\log \hat{L}_{1}\left( \theta \right) $.
In contrast, when $n^{o}$ tends to $n$, and hence $p$ tends to 1, the
expression of $\log \hat{L}_{2}\left( \theta \right) $ in equation~(\ref%
{ExpLL2missing}) tends to that of $\log \hat{L}_{2}\left( \theta \right) $
in equation~(\ref{ExpLL2}) such that the log-likelihood function tends to
equation~(\ref{eqlogLMEmp}).

\section{Concentrated Likelihood}\label{rk:concentratedLL}
In most applications, the parameters of primary
interest are those governing workers' deterministic values of amenities and
firms' deterministic values of productivity, i.e. $A$ and $\Gamma $
respectively. The remaining parameters $\left( \sigma _{1},\sigma
_{2},t,s^{2}\right) $ are auxiliary, and the focus of attention is the \emph{%
concentrated log-likelihood}, which is given by
\begin{equation*}
\log l\left( A,\Gamma \right) :=\max_{\sigma _{1},\sigma _{2},t,s^{2}}\log
\hat{L}\left( \theta \right) =\log \hat{L}_{1}\left( \Phi \right)
+\max_{\sigma _{1},\sigma _{2},t,s^{2}}\log \hat{L}_{2}\left( A,\Gamma
,\sigma _{1},\sigma _{2},t,s^{2}\right) .
\end{equation*}%
where as usual, $\Phi =A+\Gamma $. Denoting $\sigma _{1}^{\ast },\sigma
_{2}^{\ast },t^{\ast }$ and $s^{\ast 2}\,$\ the optimal value of the
corresponding parameters given $A$ and $\Gamma $, one gets
\begin{equation}
\left( \sigma _{1}^{\ast },\sigma _{2}^{\ast },t^{\ast }\right) =\arg
\min_{\sigma _{1},\sigma _{2},t}\sum_{i=1}^{n}\left( W_{i}-w_{i}\left(
\theta \right) \right) ^{2},
\end{equation}%
which is the solution to a Nonlinear Least Squares problem which is readily
implemented in standard statistical packages, and $s^{\ast
2}=n^{-1}\sum_{i=1}^{n}\left( W_{i}-w_{i}\left( \theta ^{\ast }\right)
\right) ^{2}$. The partial derivative of the concentrated log-likelihood
with respect to $A_{k}$ is given by%
\begin{equation*}
\frac{\partial \log l\left( A,\Gamma \right) }{\partial A_{k}}=\frac{%
\partial \log \hat{L}_{1}\left( \Phi \right) }{\partial \Phi _{k}}+\frac{%
\partial \log \hat{L}_{2}\left( A,\Gamma ,\sigma _{1}^{\ast },\sigma
_{2}^{\ast },t^{\ast },s^{\ast 2}\right) }{\partial A_{k}}
\end{equation*}%
and a similar expression holds for $\partial \log l/\partial \Gamma _{k}$.
These formulas are derived in the following proof.

\begin{proof}
Recall $\theta =\left( A,\Gamma ,\sigma _{1},\sigma _{2},t,s^{2}\right) $.
The maximum likelihood problem can be written as%
\begin{equation*}
\max_{\theta }\log \hat{L}\left( \theta \right) =\max_{A,\Gamma }\log
l\left( A,\Gamma \right)
\end{equation*}%
where $\log l\left( A,\Gamma \right) =\max_{\sigma _{1},\sigma
_{2},t,s^{2}}\log \hat{L}\left( \theta \right) $ is the concentrated
log-likelihood which can be rewritten as
\begin{equation}
\log l\left( A,\Gamma \right) =\log \hat{L}_{1}\left( \theta \right)
+\max_{\sigma _{1},\sigma _{2},t,s^{2}}\log \hat{L}_{2}\left( \theta \right)
.  \label{ConLogL}
\end{equation}%
where
\begin{equation}
\max_{\sigma _{1},\sigma _{2},t,s^{2}}\log \hat{L}_{2}\left( \theta \right)
=-\min_{s^{2}}\left( \frac{n}{2}\log s^{2}+\frac{1}{2s^{2}}\min_{\sigma
_{1},\sigma _{2},t}\sum_{i=1}^{n}\left( W_{i}-w_{i}\left( \theta \right)
\right) ^{2}\right)  \label{eqmaxCondL2}
\end{equation}%
The second minimization in equation~(\ref{eqmaxCondL2}) is an Ordinary Least
Squares problem whose solution given $A,\Gamma $, denoted $\left( \sigma
_{1}^{\ast },\sigma _{2}^{\ast },t^{\ast }\right) $, is the vector of
coefficients of the OLS\ regression of $W$ on $\left( \gamma -b,a-\alpha
,1\right) $. The value of $s^{2}$, denoted $s^{\ast 2}$, is given by%
\begin{equation*}
s^{\ast 2}=\frac{\sum_{i=1}^{n}\left( W_{i}-w_{i}\left( \theta ^{\ast
}\right) \right) ^{2}}{n}.
\end{equation*}

The envelope theorem yields an expression for the gradient of the
concentrated log-likelihood with respect to the concentrated parameters $A$
and $\Gamma $, that is%
\begin{equation*}
\nabla _{A,\Gamma }\log l\left( A,\Gamma \right) =\nabla _{A,\Gamma }\log
\hat{L}_{1}\left( \theta ^{\ast }\right) +\nabla _{A,\Gamma }\log \hat{L}%
_{2}\left( \theta ^{\ast }\right) .
\end{equation*}%
The elements of the first part of the gradient are given in theorem~\ref%
{thm:LLGradient} part (i) whereas parts (ii), (iv) and (v) of theorem~\ref%
{thm:LLGradient} provide the building blocks for the elements of the second
part of the gradient.
\end{proof}

\end{document}